\newtheorem{tw}{Theory}[section]
\newtheorem{lm}{Lemma}[section]
\newtheorem{df}[tw]{Definition}
\begin{document}

\begin{center}

{\LARGE \bf Parallel implementation of flow and matching algorithms}\\[20pt]

{\large Agnieszka Łupińska}\\
{\large Jagiellonian University, Kraków}\\
{\large agnieszka.lupinska@uj.edu.pl}\\[30pt]

{\bf Abstract}\\
\end{center}

{\it 
In our work we present two parallel algorithms and their lock-free implementations using a popular GPU environment Nvidia CUDA. The first algorithm is the push-relabel method for the flow problem in grid graphs. The second is the cost scaling algorithm for the assignment problem in complete bipartite graphs.
}

\section{Introduction}

The maximum flow problem has found many applications in various computer graphics and vision problems. For example, the algorithm for~the~graph cut problem is an optimization tool for the optimal MAP (Maximum A-Posteriori Probability) estimation of energy functions defined over an MRF (Markov Random Field) [4, 11, 12, 13, 17]. Another new and most interesting for us idea consists in computing optical flow by reducing it to~the~assignment (weighted matching) problem in bipartite graphs ([18]). Therefore it is important to look for new solutions to improve the execution time of~algorithms solving the max flow and related problems. 

The~new~approach to acceleration of algorithms uses the power of GPU (graphics parallel units). This has motivated us to pose the main purpose of~this~research: find an efficient parallel algorithm for the weighted matching problem and implement it in a popular GPU environment Nvidia CUDA. Naturally, such an algorithm can use max flow computation techniques and~the~easiest method to compute maxflow in parallel is the push-relabel algorithm.
Hence in the first part of our work we develop our own CUDA implementation of Hong's lock-free push-relabel algorithm. It can be used to find graph cuts in graphs constructed by Kolmogorov et.~al.~in~[12] and is suitable for minimization of any energy function of the class characterized by these autors.

\begin{figure}[h!b!]
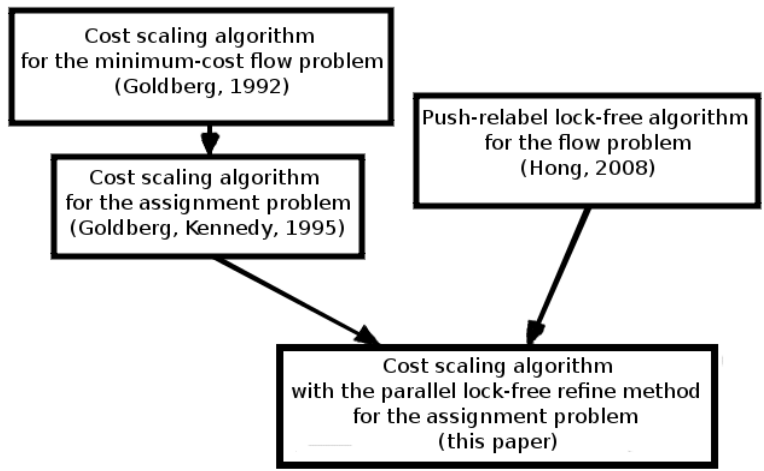

\begin{center}
\includegraphics[width=4in]d
\end{center}
\caption{The algorithms described in Section 5}
\end{figure}

In the second part of the paper we focus on the assignment problem, ie.~the~max weight matching problem. The cost scaling algorithm solving this problem has also found applications in many computer vision tasks such as~recognition and tracking of images.
Finally we present our implementation of the cost scaling algorithm for the assignment problem, where the core \textit{refine} procedure is implemented lock-free on CUDA.

This work is organized as follows. Section 2 contains all needed definitions in our paper. Section 3 introduces the CUDA programming environment. Section 4 presents the sequential push-relabel algorithm and its parallel counterparts: a blocking version of the parallel push-relabel algorithm, presented by Vineet and~Narayanan [4], and a lock-free push-relabel algorithm by Hong [5]. In the end of this section we present our CUDA implementation of the lock-free push-relabel algorithm for grid graphs. 
Section 5 starts from the presentation of two sequential algorithms: a scaling minimum-cost flow method and its counterpart for the assignment problem, both developed by Goldberg et al. [2,8,9]. Then we present our own parallel cost scaling algorithm for the assignment problem using the lock-free push-relabel algorithm. In the end of section 5 we present our CUDA implementation of this algorithm for arbitrary graphs. Figure 1 shows the diagram of the reductions between the main analyzed problems.

\section{CUDA Programming}

CUDA (Compute Unified Device Architecture) is a parallel computing architecture for Nvidia GPUs. In the past the GPU was used only as the~coprocessor of the CPU to reply on the many graphics tasks in real-time. Now, thanks to increasing computation power of GPUs, they are also very efficient on many data-parallel tasks. Therefore GPUs are used also for many non-graphics applications like the push-relabel max flow algorithm. 

We present two algorithms implemented in CUDA 4.0. Both were tested on Nvidia GTX 560 Ti (i.e. on a device of compute capability 2.1). 

The programs operating on the GPU are implemented in the CUDA C~language, an extention of C. CUDA C allows to define C functions called \textit{kernels} that can be executed in parallel. The C program is run on the \textit{host} (or CPU) by the host thread. The kernels (CUDA programs) are launched by the host thread and run on the \textit{device} (or GPU) by many CUDA threads. 
The~number of threads executing a kernel is defined by the programmer. The~running threads are split into three-dimensional blocks. The set of~blocks forms a three-dimesional grid.\\
An example of a kernel declaration is:

\begin{quote}
\begin{scriptsize}
\_\_global\_\_ void kernel\_function(int* data);
\end{scriptsize}
\end{quote}
which must by called like this: 
\begin{quote}
\begin{scriptsize}
dim3 gD $= 10$; /* dimension of grid, unspecified component is initialized to 1*/ \\
dim3 bD $= (32, 8, 1)$; /* dimension of block */ \\
kernel\_function $\lll$gD, bD$\ggg$ (someArray);\\
\end{scriptsize}
\end{quote}

Each thread executing a kernel is given the following built-in three-dimensional variables:
\begin{quote}
\begin{scriptsize} 
dim3 threadIdx $ = ($ threadIdx.x, threadIdx.y, threadIdx.z $)$ /* unique for each thread in the block */\\
dim3 blockIdx $ = ($ blockIdx.x, blockIdx.y, blockIdx.z $)$	/* unique for each block in grid */\\
dim3 blockDim $ = ($ blockDim.x, blockDim.y, blockDim.z $)$	\\
dim3 gridDim $ = ($ gridDim.x, gridDim.y, gridDim.z $)$
\end{scriptsize}
\end{quote} 
Hence for each thread we can calculate its unique index in grid in the following way:
\begin{quote}
\begin{scriptsize}
int threadsInBlock $=$ blockDim.x $*$ blockDim.y;\\
int numberOfBlockInGrid $= ($blockIdx.y $*$ gridDim.x$)\ + $ blockIdx.x;\\
int numberOfThreadInBlock $= ($threadIdx.y $*$ blockDim.x$)\ + $ threadIdx.x;\\
int thid\_id $= ($threadsInBlock $*$ numberOfBlockInGrid$)\ +$ numberOfThreadInBlock;
\end{scriptsize}
\end{quote}

The host and the device have separated memory spaces called the \textit{host memory} and the \textit{device memory}, both residing in the dynamic random-access memory (DRAM). There are several types of device memory: \textit{global}, \textit{local}, \textit{shared}, \textit{constant} and \textit{texture}. There are also \textit{registers}. The shared memory and the registers are located on the chip of the GPU. The others are located off the chip so they have large access latency. However the sizes of the shared memory and the registers are much smaller than of the others. Note that~since the local memory is located off chip then its access latency is also big.

The scopes and the lifetimes of the local memory and the registers are~restricted to one thread. The scope of the shared memory and its lifetime is restricted to all threads of the block. The lifetime and access to other of~memory are avaliable for all launched threads and the host.

For devices of compute capability 2.x the local and the global memory are~cached. There are two types of cache: an L1 cache for each multiprocessor and an L2 cache shared by all multiprocessors on the GPU. The size of~the~L2 cache is fixed equal to 768 KB. The L1 cache and~the~shared memory are~stored in the same on-chip memory and their initial sizes are~48 KB of shared memory and 16 KB of L1. These values can be reversed by the~function \textit{cudaFuncSetCacheConfig()} (or \textit{cuFuncSetCacheConfig()} for~Driver~API). In~our implementations the shared memory is unuseful and its size that we use is not bigger than 16 KB. However, we use the first configuration because it~gives better running times. 

The threads can be synchronized in the scope of a block by the function \mbox{\textit{\_\_syncthreads()}}. It sets a semaphore which causes that the execution of~the~further code waits until all parallel threads reach the specified point. 
	
In our implementations we use atomic functions: \textit{atomicAdd}() and \textit{atomicSub}(), avaliable for devices of compute capability 2.x, which perform a~read-modify-write operations on 64-bit words residing in the global memory. The~atomic operations are slower than their non-atomic counterparts but they allow us to implement the programs without any synchronization \\of~the~threads. 

A bandwidth is the rate at which data can be transferred. The bandwidth between the global memory on device and~the global memory on the~host is much smaller than the bandwidth between the global memory on device and~the~memory space on GPU. Therefore it is important to minimize the~data transfer between the device and the host. In our implementation we strived to reduce the copying only to necessary arrays of data.

To allocate and deallocate memory on the device we use the functions \textit{cudaMalloc()} and \textit{cudaFree()} and to copy memory between the device\\ and~the~host we use the function \textit{cudaMemcpy()}. 

In our implementations we use the cutil.h library available in the GPU computing SDK which allows us to detect the errors returned by the device.

\section{Basic Graph Definitions}

Let $G=(V,E)$ be a directed graph and $u: E \rightarrow \mathbf{R^{+}}$ be an edge capacity function. 
For formal reasons, if $(x, y) \notin E$ then we set $u(x, y) = 0$.
Let $s \neq t$ be two distinguished verteces in $V$, the source and the sink, respectively. Then the triple $F=(V, E, u, s, t)$ is called a \textbf{flow network}.

A \textbf{pseudoflow} is a function $f: V\times V \rightarrow \mathbf{R}$ such that, for each $(x, y)~\in~E$, $f$ satisfies the capacity constraints: $f(x,y) \leq u(x, y)$, and~the~skew symmetry: $f(x,y) = -f(y,x)$.
We say that the pseudoflow $f$ is a \textbf{flow} if for each node $x \in V - \{s, t\}$, $\sum_{(x, y) \in E}{f(x, y)} = 0$. 
We say that~an~edge~$(x, y)$ is in the flow $f$, if $f(x, y) > 0$. The \textbf{value of the flow} f is $|f| = \sum_{(s, x) \in E}{f(s, x)}$.

In the \textbf{max flow problem} we are given a flow network $F$ with a capacity function $u$ and distinguished vertices $s, t$. We must find a flow $f$ from $s$ to~$t$, such that $|f|$ is maximum.

The \textbf{residual capacity} of an edge \mbox{$(x,y) \in E$} is $u_f(x, y) = u(x, y) - f(x, y)$. 
An edge, for which $u_f(x, y) > 0$, is called a \textbf{residual edge}. The~set of all residual edges in $G$ is~denoted by $E_f$. The graph $G_f = (V_f, E_f)$ is~called a \textbf{residual graph}.

It is easy to see that if $u_f(x, y) > 0$ then $u_f(y, x) > 0$ and it is possible to push more units of flow through $(y, x)$. 
Let $u_f(x, y) > 0$ for $(x, y) \in E$. If~$f(x, y) > 0$ than $f(y, x) < 0$. This implies that $u_f(y, x) = u(y, x) - f(y, x)~>~0$ because of $u(y, x) \geq 0$. 

To present the push-relabel algorithm we need to introduce the following definitions. Let $F$ be a flow network and $f:V\times V \rightarrow \mathbf{R}$ be a pseudoflow function. For each $x \in V$ we define $e(x)$, the \textbf{excess} of the node $x$, as~the~sum: $e(x) = \sum_{(z, x) \in E}{f(z, x)} - \sum_{(x, y) \in E}{f(x, y)}$. Note if $f$ is a flow then $e(x) = 0$ for all $x \in V$.~If~$e(x) > 0$, for some node $x \in V$, then we say that $x$ is an \textbf{active node}. The \textbf{height} of a node $x$, denoted $h(x)$ is a~natural number from $[0, n]$.

Now we give definitions related to the matching problems and max flow min cost algorithms. Let $G\ =(V=X\cup Y,\ E)$ be a bipartite graph, $|X| = |Y| = n, |E| = m$. The bipartite \textbf{matching problem} is to find a~largest cardinality subset $M \subseteq E$ such that each vertex $x$ belongs to~at~most one edge of $M$. In case $|M| = n$, $M$ is called a \textbf{perfect matching}.

If $w: E \rightarrow \mathbf{R}$ is a weight function for the edges then $w(M) = \sum_{(x,y) \in M}{w(x, y)}$ is the weight of the matching $M$. 
The \textbf{assignment problem} is to find the~largest cardinality matching $M \subseteq E$ of the maximum weight $w(M)$.

For any graph $G=(V, E)$ the \textbf{cost} function is $c: E \rightarrow \mathbf{R}$. Hence, the~$\textbf{cost of a pseudoflow}$ $f$ is $c(f) = \sum_{(x, y) \in E}{c(x, y)f(x,y)}$.

Assume a flow network $F = (V, E, u, s, t)$ is given with an aditional edge cost function $c$. The \textbf{max flow min  cost problem} consist in finding a~maximum flow with the lowest possible cost.

Following [9] we introduce definitions used in the cost scaling algorithm, they extend the notation given above. 

Let $I' = (V, E, u, s, t, c)$ be a instance of the max flow min cost problem. A \textbf{price} of a node is given by a function $p: V \rightarrow \mathbf{R}$.
A \textbf{reduced cost} of~an~edge $(x, y) \in E$ is $c_p(x, y) = c(x, y) + p(x) - p(y)$, while a \textbf{part-reduced cost} of an edge $(x, y) \in E$ is $c_p'(x, y) = c(x, y) - p(y)$.

For a constant $\epsilon \geq 0$ a pseudoflow $f$ is \textbf{$\epsilon$-optimal} with respect to a price function $p$ if, for every residual edge $(x, y) \in E_f$, we have $c_p(x, y) \geq -\epsilon$.
For~a~constant $\epsilon \geq 0$ a pseudoflow $f$ is \mbox{$\epsilon$-optimal} if it is $\epsilon$-optimal with~respect to some price function $p$.

We say that a residual edge \mbox{$(x, y) \in E_f$} is \textbf{admissible} if $c_p(x, y) < 0$.

\section{Push-Relabel Algorithm}

	\subsection{Sequential Algorithm}

In this section we focus on a standard sequential algorithm solving the~max flow problem.

Among many sequential algoritms solving the max flow problem, there~are three basic methods. 
The \textit{Ford-Fulkerson} and the \textit{Edmonds-Karp} algorithms are the most common and easiest. The Ford-Fulkerson algorithm calculates the flow in time $O(E|f^*|)$, where $|f^*|$ is value of the max flow $f^*$. The running time of the Edmonds-Karp algorithm is $O(VE^2)$. In every step, these~two algorithms look for a new augumenting path from the source to~the~sink. If found, they increase the flow on the edges of the path. Otherwise they stop and return the flow found up to this moment (it is optimal). 

The third solution is \textit{the push-relabel} algorithm. 
We present its generic version whose time complexity is $O(V^2E)$. Next we discribe two heuristics which significantly improve its execution time. The generic push-relabel version with two heuristics can be effectivly parallalized and provides a basis for our further considerations.

At the beginning of algorithm $h(x) = 0$, for all $x \in V-\{s\}$, and~$h(s)~=~|V|$. We have also $e(x) = 0$, for all $x \in V$, and $e(s) = \infty$ (Algorithm 4.1).\\

\begin{scriptsize}
\begin{tabular}{l} 

\hline

\textbf{Algorithm 4.1. \textit{Init} operation for the push-relabel algorithm} \\

\hline\\

for each $(x, y) \in E$ do \\
\hspace{0.5cm}
$f(x, y) \leftarrow 1 $\\
\hspace{0.5cm}
$f(y, x) \leftarrow 0 $\\
for each $x \in V - \{s\}$ do\\
\hspace{0.5cm}
$e(x) \leftarrow 0 $\\
\hspace{0.5cm}
$h(x) \leftarrow 1 $\\
$e(s) \leftarrow \infty $\\
$h(s) \leftarrow |V|$\\[3pt]

\hline

\end{tabular}
\end{scriptsize}
\\
\\
As opposed to the previous two algorithms, \textit{push-relabel} does not look for~augumenting paths but acts locally on the nodes (Algorithm 4.2).
\\
\\
\begin{scriptsize}
\begin{tabular}{l} 

\hline

\textbf{Algorithm 4.2. The \textit{push-relabel} algorithm} \\

\hline\\

$Init()$\\
make set $S$ empty\\
$S \leftarrow s$\\
while ($S$ is not empty) do\\
\hspace{0.5cm} $x \leftarrow S$.pop()\\
\hspace{0.5cm} $discharge(x)$\\
\hspace{0.5cm} if ($x$ is active node)\\
	\hspace{1.2cm} $S$.push(x)\\[3pt]

\hline

\end{tabular}
\end{scriptsize}
\\
\\

The \textit{discharge} operation for each active node $x$ with the set $S$ selects either \textit{push} or \textit{relabel} operation (Algorithm 4.3).\\

\begin{scriptsize}
\begin{tabular}{l} 

\hline

\textbf{Algorithm 4.3. \textit{discharge(x)}, \textit{push(x, y)} and \textit{relabel(x)} operations} \\

\hline\\

\textit{discharge(x)}:\\[3pt]
if ($\exists{\ (x, y) \in E_f}:\ h(x) = h(y) + 1$)\\[2pt]
	\hspace{0.5cm} $push(x, y)$\\
else \\
	\hspace{0.5cm} $relabel(x)$
\\[5pt]

\textit{push(x, y)}:\\[3pt]
$\delta \leftarrow \min{\{u_f(x, y),\ e(x)\}}$\\
$e(x) \leftarrow e(x) - \delta$\\
$e(y) \leftarrow e(y) + \delta$\\
$f(x, y) \leftarrow f(x, y) + \delta$\\
$f(y, x) \leftarrow f(y, x) - \delta$\\
\\

\textit{relabel(x)}:\\[3pt]
$h(x) \leftarrow \min{\{h[y]:\ (x, y) \in E_f\}} + 1$\\
\\[3pt]\hline

\end{tabular}
\end{scriptsize}
\\
\\

The push operation is performed on an active node $x$, for which there exists an outgoing residual edge $(x, y) \in E_f$ and the node $y$ satisfies the~height constraint: $h(x)~=~h(y)~+~1$. If the node $x$ is active and every edge $(x, y) \in E_f$ does not satisfy this constraint then the relabel operation is~performed.

It can be shown that the generic push-relabel algorithm is correct, terminates and its running time is $O(V^2E)$. The proofs and a comprehensive discussion about the generic push-relabel algorithm and its improvments, can be found in [1] and [2].

		\subsection{Heuristics: Global and Gap Relabeling}

The above version of the push-relabel algorithm has poor performance in~practical applications. To improve the running time two heuristics are used: global relabeling and gap relabeling [2]. To get intuition how these heuristics work, we make some observations.

During the execution of the algorithm, both the excess and the height of~the~source and the sink are not changing. Then, to the end of the algorithm, the source height is $|V|$ and the sink height is $0$. Let us define a~\textit{distance function}.

\begin{df}[\textbf{distance function}]
Let $G=(V, E)$ be a network with~a~flow $f$ and let $E_f$ be the set of its residual edges. The function $h: V \rightarrow \mathbf{N}$ is a distance function if $h(s) = |V|$, $h(t) = 0$, and for each edge $(x, y)~\in~E_f,\\ h(x)~\leq~h(y) + 1$.
\end{df}

It can be proved ([1]) that the height function satisfies the properties of~a~distance function at each step of the algorithm. Hence we can think about the node height as its distance from the sink to the source. The major issue from which the algorithm's performance suffers is an execution of~a~lot of~unnecessary relabel operations. It can be proved ([1]) that during the~execution, a node height can reach a limit of $2|V|-1$. The \textit{global relabeling} heuristic (Algorithm 4.4) prevents the heights of the nodes from growing fast and assigns them the smallest admissible heights. 
\\
\\
\begin{scriptsize}
\begin{tabular}{l}

\hline

\textbf{Algorithm 4.4 \textit{global relabeling} heuristic} \\

\hline\\

make $Q$ empty queue\\
for each $x \in V$\\
\hspace{0.5cm}$x$ scanned $\leftarrow$ false\\
$Q$.enqueue($t$)\\
$t$.scanned $\leftarrow$ true\\
while ($Q$ not empty) do\\
\hspace{0.5cm}$x \leftarrow Q$.dequeue()\\
\hspace{0.5cm}$current \leftarrow h(x)$\\
\hspace{0.5cm}$current \leftarrow current + 1$\\
\hspace{0.5cm}$\forall\ (y, x) \in E_f$:\ $y$ not scanned do\\
	\hspace{1.2cm}$h(y) \leftarrow current$\\
	\hspace{1.2cm}$y$ scanned $\leftarrow$ true\\
	\hspace{1.2cm}$Q$.enqueue(y)\\
\\[3pt]\hline

\end{tabular}
\end{scriptsize}
\\
\\

The global relabeling technique consists in performing a breadth-first backwards search (BFS) in the residual graph and assigning the new heights, equal to the level number of a node in the BFS tree. Obviously BFS takes linear time $O(m+n)$. Usually global relabeling is performed once every $n$~relabels. This heuristic significantly improves the performance of the push-relabel method.

The second heuristic is \textit{gap relabeling}. The gap relabeling "removes" from the residual graph the nodes that will never satisfy the height constraint. This improves the performance of the push-relabel method because of reducing the number of active nodes. However its result is not so significant for the running time as the global relabeling. The gap relabeling heuristic also can be done in linear time.

The gap relabeling can be performed after the BFS loop of the global relabeling. Any non-scanned node $x$ is not reachable from the sink so we can set its height to $|V|$. This makes the pushed flow omit $x$ and go to another node $y$ (from which there is an augumenting path). As a result the pushed flow gets faster to the sink.

We have presented the generic push-relabel algorithm with two additional techniques which will be used next in parallel versions. Further improvements of the sequential push-relabel algorithm can be found e.g. in [1], [2] and [3].

	\subsection{Parallel Approach}

The push-relabel algorithm was parallelized by Anderson and Setubal in~1992~[14]. One of the first CUDA implementation was proposed by Vineet and Narayanan ([4]). 
They presented the push-relabel algorithm to~graph cuts on the GPU, which is a tool to find the optimal MAP estimation of~energy functions defined over an MRF [11, 12, 13]. 

Vineet and Narayanan's CUDA implementations were tested on the Nvidia 280 and 8800 GTX graphic cards (devices of compute capability at most 1.3). Their algorithm works on grid graphs which arise in MRFs defined over images. The dataset and CUDA implementations are available\\ from~http://cvit.iiit.ac.in/index.php?page=resources. 

The authors assumed that each node of a graph is handled by one thread and the number of outgoing edges per node is fixed, equal 4. Authors suggested that the algorithm can be implemented for the expanded 3D graphs, that is with 8 outgoing edges per node. Their algorithm requires a computer architecture that could launch the same number of threads as~the~number of~the~graph nodes so possibilities to run this algorithm on a CPU are small.

The authors prepared two implementations of push-relabel algorithm: atomic and non-atomic. The first of them requires two phases: \textit{push} and~\textit{relabel}. The second implementation, which was designed for devices of compute capability lower than 1.2, additionally requires a \textit{pull} phase. Further we will describe only the first implementation, more details about the second can be found in [4]. 

Vineet and Narayanan have used the graph construction of Kolmogorov et. al. ([12]) which maintains the grid structure, suitable for the CUDA architecture. The data of a grid graph are stored in the global and the shared memory of the device. They are in 8 separated tables. Table of~the~heights is~stored in the shared memory and other tables are stored in the global memory. Among them are the excesses, the relabel masks (which say whether the~node is active), the residual capacities of edges upwards/downwards nodes, the residual capacities of edges towards the nodes on the left/right and the residual capacities of edges toward the sink. Access to the element in table is by calculating its index. 

Before running the algorithm, the host thread copies data to the global memory on device. The main loop of algorithm is executed on~CPU and~for~each phase the host thread calls another kernel. After finishing a push kernel, the~control is returned to the host thread and can be launch the next relabel kernel. However, authors suggested that for some grid graphs, running $m$ push phases before each relabel phase, improved the execution time. Algorithm stops when all excesses stay the same after a few iterations of loop.

In first step of the push kernel, each node saves its height in the shared memory of thread-block. 
After the saving, each node whose relabel mask is set to 1 pushes the flow toward its neighbors, if they satisfy a height constraint. In this step the threads read the heights saved in shared memory. 

In the relabel phase first, each thread sets a new relabel mask. If~excess of~the~node is positive and is connected with neighbor, which satisfied a~height constraint, the mask is~set to~one (it is active). If node only has positive excess, the mask is set to zero (node is passive). Otherwise, mask is~set to two (inactive node) and this node will never be active. After setting all the relabel masks, threads can calculate new heights of the nodes. They read the old heights saved in the shared memory and write the~new to~the~global memory. It is preparation for next a push phase.

To improve an execution time of algorithm, Authors experimented with~varying numbers of threads per block. The best result, they obtained for~a~thread-block of size $32 \times 8$.

In the Vineet et al. implementation, the synchronization of~threads is~assured by \_\_syncthreads() CUDA function. This approach to the parallel push-relabel algorithm blocks the threads which are ready to run next operation before finishing other threads. Therefore it causes long execution time of the algorithm.

	\subsection{Parallel Lock-Free Algorithm}

In 2008 Hong presented a lock-free multi-threaded algorithm for the max flow problem ([5]) based on Goldberg's version ([2]) of the push-relabel algorithm. Implementation of Hong's algorithm requires a multi-threaded architecture that supports read-modify-write atomic operations. In our implementation we have used a Nvidia CUDA atomicAdd(int*, int, int) and~atomicSub(int*, int, int) functions. 

Without lost of generality we assume that the number of running threads is $|V|$ and each of them handles exactly one node of the graph, including all push and relabel operations on it. In several, a few nodes can be handled by~one~thread. 

Let $x$ be the running thread representing the node $x \in V$. In Hong's algorithm (Algorithm 4.5) each of the running threads has the following private atributes. The variable $e'$ stores the excess of the node $x$. The~variable $h'$ stores the height of the currently considered neighbour $y$ of $x$ such that $(x, y) \in E_f$. The~variable $\tilde{h}$~stores the height of the lowest neighbour $\tilde{y}$~of~$x$.

Other variables are shared between all the running threads. Among them there are the arrays with excesses and heights of nodes, and residual capacities of edges.

First, the \textit{Init} operation is performed by the master thread, in CUDA programing it is the host thread. This init code is the same as its counterpart in the sequential push-relabel version. Next, the master thread starts the~threads executing in parallel the lock-free push-relabel algorithm (in CUDA the host thread launches kernels).
	  
The basic changes, introduced by Hong, deal with the selection of operation (push or relabel) that should be executed by $x$, and to which of~the~adjacent nodes $\tilde{y}, (x, \tilde{y}) \in E_f$, a flow must be pushed. In~opposite to~the~push operation of the generic sequence version where any node $y$ connected by~a~residual edge to $x$ such that $h(x) = h(y) + 1$ could be pushed, it selects the lowest node among all the nodes connected by a residual edges (lines 4-9). Next if~the~height of $\tilde{y}$ is less than the height of $x$ (line 10), the push operation is performed (lines 11-15). Otherwise, the relabel operation is~performed, that is, the height of $x$ is modified to $h(\tilde{y}) + 1$ (line 17). Note that the~relabel operation need not be atomic because only the $x$ thread can change the value of the height of $x$. Furthermore, all critical lines in the code where more than two threads execute the write instruction are atomic. Hence it is easy to see that the algorithm is correct in respect to read and write instructions.
\\
\\
\begin{scriptsize}
\begin{tabular}{l}

\hline

\textbf{Algorithm 4.5. Lock-free multi-threaded push-relabel algorithm by Bo Hong} \\

\hline\\

\textit{Init():}\\[3pt]
$h(s) \leftarrow |V|$\\
for each $x \in V - {s}$\\
\hspace{0.5cm}$h(x) \leftarrow 0$\\
for each $(x, y) \in E$\\
\hspace{0.5cm}$u_f(x, y) \leftarrow u_{xy}$\\
\hspace{0.5cm}$u_f(y, x) \leftarrow u_{yx}$\\
for each $(s, x) \in E$\\
\hspace{0.5cm}$u_f(s, x) \leftarrow 0$\\
\hspace{0.5cm}$u_f(x, s) \leftarrow u_{xs} + u_{sx}$\\
\hspace{0.5cm}$e(x) \leftarrow u_{sx}$\\[10pt]

\textit{lock-free push-relabel():}\\[3pt]
1. /*x - node operated by the x thread*/\\
2. while ($e(x) > 0$) do\\
3. \hspace{0.5cm}$e' \leftarrow e(x)$\\
4. \hspace{0.5cm}$\tilde{y} \leftarrow NULL$\\
5. \hspace{0.5cm}for each $(x, y) \in E_f$ do\\
6.        \hspace{1.2cm}$h' \leftarrow h(y)$\\
7.        \hspace{1.2cm}if $\tilde{h} > h'$ do\\
8.        	\hspace{2.0cm}$\tilde{h} \leftarrow h'$\\
9.		\hspace{2.0cm}$\tilde{y} \leftarrow y$\\
10.\hspace{0.5cm}if $h(x) > \tilde{h}$ do /*the x thread performs PUSH towards y*/\\
11.	\hspace{1.2cm}$\delta \leftarrow min\{e',\ c_f(x, \tilde{y})\}$\\
12.	\hspace{1.2cm}$c_f(x, \tilde{y})\ \leftarrow c_f(x, \tilde{y}) - \delta$\\
13.	\hspace{1.2cm}$c_f(\tilde(y), x)\ \leftarrow c_f(\tilde(y), x) + \delta$\\
14.	\hspace{1.2cm}$e(x) \leftarrow e(x) - \delta$\\
15.	\hspace{1.2cm}$e(\tilde(y)) \leftarrow e(\tilde(y)) + \delta$\\
16.\hspace{0.5cm}else do /*the x thread performs RELABEL*/\\
17.	\hspace{1.2cm}$h(x) \leftarrow \tilde{h} + 1$\\
\\[3pt]\hline

\end{tabular}
\end{scriptsize}
\\
\\

All running threads have access to a critical variable in the global memory, but indeed their task are peformed sequentialy thanks to the atomic access to the data. The order of the operations in this sequence cannot be predicted. Despite this the algorithm can be proved correct.

It can be proved that the lock-free algorithm terminates after at most $O(V^2 E)$ the push/relabel operations. Because it is executed in parallel by~many threads, the complexity of algorithm is analyzed in the number of the operiations, not in the execution time.

	\subsection{Lock-Free Algorithm and Heuristics}

In 2010 Hong and He ([6]) improved the lock-free push-relabel algorithm, by adding to it a sequential global relabel heuristic performed on CPU (Algorithm 4.6). 
According to the authors and their experimental result, the new CPU-GPU-Hybird scheme of the lock-free push-relabel algorithm is robust and efficient. 
\\
\\
\begin{scriptsize}
\begin{tabular}{l}

\hline

\textbf{Algorithm 4.6. CPU-GPU-Hybird of push-relabel algorithm by Hong and He} \\

\hline\\

\textit{Init():}\\[3pt]
1. initialize $e$, $h$, $u_f$ and \textit{ExcessTotal}\\
2. copy $e$ and $u_f$ from the CPU main memory to the CUDA global memory\\[10pt]

\textit{push-relabel-cpu():}\\[3pt]
1. while ($e(s) + e(t) < ExcessTotal$) do\\
2. \hspace{0.5cm}copy $h$ from the CPU main memory to the CUDA global memory\\
3. \hspace{0.5cm}call \textit{push-relabel-kernel()}\\
4. \hspace{0.5cm}copy $u_f$, $h$ and $e$ from CUDA global memory to CPU main memory\\
5. \hspace{0.5cm}call \textit{global-relabel-cpu()}
\\[3pt]\hline

\end{tabular}
\end{scriptsize}
\\
\\

Following the Authors we will talk about the CPU-GPU-Hybrid algorithm in the context of CUDA programming. 
Similarly as in the previous algorithm we assume that each node is operated by at most one thread. 
The~previous version of the~lock-free push-relabel algorithm will be called the generic algorithm, while the CPU-GPU-Hybrid scheme will be named the~hybrid algorithm.

The initialization of the hybrid algorithm is the same as its counterpart in the generic algorithm. 
The hybrid algorithm maintains 3 arrays with excesses, heights and residual capacities of the nodes and the edges and keeps the global variable $ExcessTotal$, equal to the value of the flow pushed from the source. 
$ExcessTotal$ resides in the global memory on the host and can be changed during the global relabeling. 

In opposite to the generic algorithm, the main body of the hybrid algorithm is controlled by the host thread on CPU. The host thread executes the while loop until the cumulative value of the excesses stored in the source and the sink achieves the value of \textit{ExcessTotal}. In this moment all the valid flow gets to the sink, and the rest of a flow returns to the source. Then, the~excess at~the~sink equals the value of the maximum flow.

In the first step of the while loop, the host thread copies the~heights of~the~nodes to device and launches the \textit{push-relabel-kernel}. 
When the control is returned back to the host thread, the calculated pseudoflow and the~heights of the nodes are copied to the CPU memory and the \textit{global-relabel-cup} is performed.
\\
\\
\begin{scriptsize}
\begin{tabular}{l}

\hline

\textbf{Algorithm 4.7. Initialization for CPU-GPU-Hybrid} \\

\hline\\[3pt]
\textit{Init():}\\[3pt]
1. $h(s) \leftarrow |V|$\\
2. $e(s) \leftarrow 0$\\
3. for each $x \in V - {s}$\\
4. \hspace{0.5cm}$h(x) \leftarrow 0$\\
5. \hspace{0.5cm}$e(x) \leftarrow 0$\\
6. for each $(x, y) \in E$\\
7. \hspace{0.5cm}$u_f(x, y) \leftarrow u_{xy}$\\
8. \hspace{0.5cm}$u_f(y, x) \leftarrow u_{yx}$\\
9. for each $(s, x) \in E$\\
10.\hspace{0.5cm}$u_f(s, x) \leftarrow 0$\\
11.\hspace{0.5cm}$u_f(x, s) \leftarrow u_{xs} + u_{sx}$\\
12.\hspace{0.5cm}$e(x) \leftarrow u_{sx}$\\
13.\hspace{0.5cm}$ExcessTotal \leftarrow ExcessTotal + u_{sx}$\\[3pt]

\hline\\

\end{tabular}
\end{scriptsize}
\\
\\
\begin{scriptsize}
\begin{tabular}{l}

\hline

\textbf{Algorithm 4.8. Lock-free push-relabel and global relabel for CPU-GPU-Hybrid}\\

\hline\\[3pt]
\textit{lock-free push-relabel():}\\[3pt]
1. /*x - node operated by the x thread*/\\
2. while ($CYCLE > 0$) do\\
3. \hspace{0.5cm}if ($e(x) > 0$ and $h(x) < |V|$) do\\
4. 	\hspace{1.2cm}$e' \leftarrow e(x)$\\
5. 	\hspace{1.2cm}$\tilde{y} \leftarrow NULL$\\
6. 	\hspace{1.2cm}for each $(x, y) \in E_f$ do\\
7.        	\hspace{2.0cm}$h' \leftarrow h(y)$\\
8.		\hspace{2.0cm}if $\tilde{h} > h'$ do\\
9.              	\hspace{2.7cm}$\tilde{h} \leftarrow h'$\\
10.              	\hspace{2.7cm}$\tilde{y} \leftarrow y$\\
11.	\hspace{1.2cm}if $h(x) > \tilde{h}$ do /*then the x thread perform PUSH towards y*/\\
12.	        \hspace{2.0cm}$\delta \leftarrow min\{e',\ c_f(x, \tilde{y})\}$\\
13.		\hspace{2.0cm}$c_f(x, \tilde{y})\ \leftarrow c_f(x, \tilde{y}) - \delta$\\
14.		\hspace{2.0cm}$c_f(\tilde(y), x)\ \leftarrow c_f(\tilde(y), x) + \delta$\\
15.		\hspace{2.0cm}$e(x) \leftarrow e(x) - \delta$\\
16.		\hspace{2.0cm}$e(\tilde(y)) \leftarrow e(\tilde(y)) + \delta$\\
17.	\hspace{1.2cm}else do /*then the x thread perform RELABEL*/\\
18.     	\hspace{2.0cm}$h(x) \leftarrow \tilde{h} + 1$\\
19.\hspace{0.5cm}$CYCLE \leftarrow CYCLE - 1$\\[10pt]

\textit{global relabeling heuristic():}\\[3pt]
1. for all $(x, y) \in E$ do\\
2. 	\hspace{0.5cm}if ($h(x) > h(y) + 1$) then\\
3. 		\hspace{1.2cm}$e(x) \leftarrow e(x) - u_f(x, y)$\\
4. 		\hspace{1.2cm}$e(y) \leftarrow e(y) + u_f(x, y)$\\
5. 		\hspace{1.2cm}$u_f(y, x) \leftarrow u_f(y, x) + u_f(x, y)$\\
6. 		\hspace{1.2cm}$u_f(x, y) \leftarrow 0$\\
7. do a backwards BFS from the sink and assign the height function\\
8. with each node's BFS tree level\\
9. if (not all the nodes are relabeled) then\\
10.	\hspace{0.5cm}$\forall x \in V$ do\\
11.		\hspace{1.2cm}if ($x$ is not relabeled and marked) then\\
12.        		\hspace{2.0cm}mark $x$\\
13.	   		\hspace{2.0cm}$ExcessTotal \leftarrow ExcessTotal - e(x)$\\

\\[3pt]\hline

\end{tabular}
\end{scriptsize}
\\
\\
\\

The \textit{push-relabel-kernel} algorithm differs from the generic algorithm\\ in~the~timing of~executing the kernel. The thread stops the while loop after $CYCLE$ iterations (where $CYCLE$ is an integer constant defined by~the~user) and not when its node becomes inactive. After stopping the~loop the~heuristic is called, and then the loop is initialized again.

Since the while loop can terminate at any moment (randomly in respect to the original sequential flow computation) it may occur that the property of some residual edge $(x, y) \in E_f$ is violated, i.e. $h(x) > h(y) + 1$.
Then, before computing the new heights of nodes, all the violating edges must be canceled by pushing the flow. It is made in lines 1-6 of the \textit{global relabeling heurisitic}.
Next, the nodes are assigned new heights by performing in the~residual graph a backwards BFS from the sink towards the source. The~new heights are equal to the shortest distances towards the sink in the residual graph. The excesses of nodes, which are not availiable from the~sink in~backwards BFS tree, must be substracted from $ExcessTotal$, because it is a stored excess which will never reach the sink.
\\
\\
In 2011 Hong and He improved both heuristics to a new Asynchronous Global Relabeling (ARG) method ([7]). So far, the global and gap relabeling heuristics were run independent from the lock-free algorithm (in CUDA implementation, to run the heuristics the control was returned to the CPU). The main reason for this was that the push and the relabel operations are mutually exclusive with the global and gap relabeling heuristcs (a critical moment is when both the heuristics and the relabel operation want to~set a~new height for the~same vertex). However, this problem does not occur for~the~non-lock-free versions of the push-relabel algorithms (which is contrary to our expectations).
In the new approach, the ARG heuristic is executed by~a~distinguished thread which corresponds to any vertex and runs periodicaly, while the other threads asychronously run push or relabel operations. It significantly improves the execution time of the algorithm. The only problem is that ARG needs to maintain a queue of the unvisited vertices whose size is $O(V)$. In CUDA programming, a queue of that size can be maintained only in the global memory the access to which is very slow. Perhaps this is why the implementation presented in [7] uses~C and~the~\textit{pthread} library for~multi-threaded constructions.

	\subsection{Our Implementation}

In our implementation of the lock-free push-relabel algorithm we have tried to use the ARG heuristic but from the reasons mentioned above this algorithm turns out to be slower than the algorithm using a heuristic launched on CPU. Therefore we use the approach presented in Algorithm 4.8 with an~additional improvment. In the end of the global relabeling phase we add the~gap~relabeling heuristic which for each unvisited node in the BFS tree sets its height to $|V|$.

We implement the procedure of the lock-free push-relabel algorithm\\ as~a~CUDA~kernel. It is executed by $|V|+2$ threads. After \textit{CYCLE} iterations of the while loop in the kernel, the control is returned to the Host thread.
The constant \textit{CYCLE} is set to 7000 by a preprocessor macro (in our tests this value yielded best results). 
Next the host thread calls the C procedure running on CPU which performs a global relabeling.
The algorithm runs until all the flow of the value \textit{ExcessTotal} gets to the sink, then~the~excess of~the~sink is equal \textit{ExcessTotal}. Note that during the execution of~the~heuristic, the value of \textit{ExcessTotal} can be decreased. 

In our implementation a vertex is a structure \textit{node} holding four pointers: \textit{excess}, \textit{height}, \textit{toSourceOrSink} and \textit{firstOutgoing}. 
The pointer \textit{toSorceOrSink} points to the edge leading directly towards the source or the sink. This~makes the access to the source and the sink faster.
The pointer \textit{firstOutgoing} points to~the~first edge on~the~adjacency list of~the~vertex. The structure of an edge (named \textit{adj}) also holds four attributes. The pointer \textit{vertex} points to~the~neighbor to which the edge leads. The pointer \textit{flow} points to~the~location in the global memory where the residual capacity of the edge is stored. The attribute \textit{mate} is a pointer to the backward edge in the residual graph and \textit{next} is a pointer to the next edge on the adjacency list. 

During the push operation peforming on the edge $(x, y)$ both the residual capacities of edges $(x, y)$ and $(y, x)$ are changed. Then to improve the cache utilization, the residual capacities of edges $(x, y)$ and $(y, x)$ are stored one after another.

Throughout the algorithm only flows, excesses and heights need to be copied between the device and host. Therefore to minimize the data transfer between the host and the device we separate arrays containing these data from the structures.
Since global relabel heuristic gets all these arrays at~the~beginning than they are copied onto the host. On the other hand after performing the heuristic only the heights need to be copied back to~the~device. Therefore we store the excesses and the residual capacities in~a~single array and the hights are stored in a separate array. 

Before starting the algorithm the two arrays are allocated on the device containg nodes and edges respectively. The first keeps the structures of~type \textit{node} and the second contains the structures of type \textit{adj}. During the execution the algorithm arrays of nodes and edges are not copied to the host.

\section{Cost Scaling Algorithm}

It is known that the non-weighted matching problem can be easily reduced to the max flow problem (for more details see [1, paragraph "Maximum bipartite matching"]). In [9] Goldberg and Kennedy present a way how to efficiently solve the weighted matching problem with a cost scaling algorithm. In their work they reduce the assignment problem to \textit{the transporation problem} and present their implementation of the algorithm. 

We present an analogous reduction from the~assignment problem to~the~max flow min cost problem. In [9], for a given graph $G' = (V'=X'\cup Y', E')$, they additionaly define a \textit{supply} $d(x),\ x \in V$ that $\forall x \in X,\ d(x) = 1$ and~$\forall y \in Y,\ d(y) = -1$. It stimulates preflow pushed from the source to~every~node of $X$ and preflow pushed from every node of $Y$ to the sink. Hence the push relabel algorithm starts execution with $e(x) = d(x)$. In our work, instead of defining the supply and the transporation problem, we initialize the push-relabel algorithm with $e(x) = 1,\ x \in X$ and $e(x) = -1,\ y \in Y$.

Let $I$ be an instance of the assignment problem: $I = (G, w),\ G=(V=X\cup Y,\ E)$ where $|X|=|Y|$ and $w$ is a weight function for edges. We construct an instance $I' = (G', u, c)$ of the max flow min cost problem as~follows. For~each~edge $(x, y) \in E$ we add $(x, y)$ and $(y, x)$ to $E'$. For each $(x, y) \in X\times Y$ define capacities: $u(x, y) = 1$ and $u(y, x) = 0$, and costs: $c(x, y) = w(x, y)$ and $c(y, x) = -w(x, y)$. The graph $G'$ is still bipartite. 

\begin{figure}[h!b!]
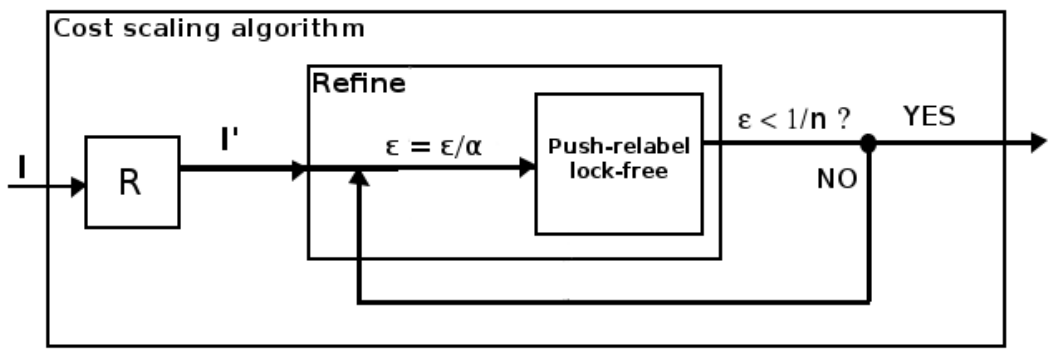

\begin{center}
\includegraphics[width=5.5in]e
\end{center}
\caption{Cost scaling algorithm.}
\end{figure}

		\subsection{Sequential Cost Scaling Algorithm}

The two cost scaling algorithms with efficient implementations which are described by Goldberg et al. in [8,9], are slightly different. The first of them~[8] is the generic cost scaling algorithm and was proposed by Goldberg in [10]. The second of them [9] was applied to the assignment problem.

Now we present the first version (Algorithm 5.0), next we point out differences from the second which requires some changes in the definitions of~the~$\epsilon$-optimal pseudoflow and the admissible edge. Our version of the algorithm uses the unmodified definitions and will be presented in subsection~5.3.

Let $C$ be the largest cost of an edge in G.
\\
\\
\begin{scriptsize}
\begin{tabular}{l}

\hline

\textbf{Algorithm 5.0. \textit{The cost scalling algorithm}} \\

\hline\\

1. \textit{Min-Cost():}\\[3pt]
2. $\epsilon \leftarrow C$\\
3. for each $x \in V$ do\\
4. \hspace{0.5cm}$p(x) \leftarrow 0$\\
4. while ($\epsilon \geq 1/n$) do\\
5. \hspace{0.5cm} $(\epsilon, f, p) \leftarrow$ Refine($\epsilon, p$)\\[10pt]

1. \textit{Refine($\epsilon, p$):}\\[3pt]
2. $\epsilon \leftarrow \epsilon / \alpha$\\
3. for each $(x, y) \in E:\ c_p(x, y) < 0$ do /*E contains edges of $G$, not $G_f$*/\\
4. \hspace{0.5cm}$f(x, y) \leftarrow 1$\\
5. while (f is not a flow) do\\
6. \hspace{0.5cm}apply a \textit{push} or a \textit{relabel} operation\\
7. return ($\epsilon$, $f$, $p$)\\[10pt]

1. \textit{push(x, y):}\\[3pt]
2. $\delta \leftarrow \min{\{e(x), u_f(x, y)\}}$\\
2. send $\delta$ units flow from $x$ to $y$\\[10pt]

1. \textit{relabel(x):}\\[3pt]
2. $p(x) \leftarrow max_{(x, z) \in E_f}\{p(z) - c(x, z) - \epsilon\}$\\

\\[3pt]\hline

\end{tabular}
\end{scriptsize}
\\
\\

The main procedure of the algorithm is a \textit{Min-Cost} method which maintains a variable $\epsilon$, a flow $f$, and a price function $p$. These variables are changed in the while loop by the procedure \textit{Refine}. On~the~beginning of~each step of the loop the flow $f$ is $\epsilon$-optimal with respect to $p$. The while loop stops when $\epsilon < 1/n$. It was proved that if \textit{Refine} reduces the parameter $\epsilon$ by~a~constant factor, the total number of iterations is $O(m \log n)$ (paper [10], Theorem 4.5).

The procedure \textit{Refine} starts with decreasing $\epsilon$ to $\epsilon / \alpha$ and saturating every admissible edge $(x, y)$, ie. $c_p(x, y) < 0$. This spoils the initial flow $f$ such that $f$ becomes an $\epsilon$-optimal pseudoflow, for $\epsilon = 0$ (because $\forall (x, y) \in E,\\ c_p(x, y) \ge 0$). This makes also some nodes active and some with a negative excess. Next the pseudoflow $f$ become an $\epsilon$-optimal flow by making a~series of the flow and the price update operations, each of which preserves $\epsilon$-optimality. There are two kind of the update operations: \textit{push} and \textit{relabel}.

The \textit{push(x, y)} operation is applied to an active node $x$ and a residual edge $(x, y)$ that is admissible: $c_p(x, y) < 0$. Then $\delta$ units of the flow is put to~the~node $y$: decreasing $e(x)$ and~$f(x, y)$ by $\delta$ and~increasing $e(y)$ and $f(y, x)$ by $\delta$.

The \textit{relabel(x)} operation is applied to an active node $x$ such that $(x, y)~\in~E_f$ and~$(x, y)$ does not satisfy the admissible constraints, i.e \mbox{$c_p(x, y) \geq 0$}. The~new value of~$p(x)$ is the smallest value allowed by the $\epsilon$-optimality constraints, ie. $max_{(x, z) \in E_f}{\{p(z) - c(x, z) - \epsilon\}}$.

Goldberg proved that the \textit{Refine} procedure is correct, that is, if it terminates, the pseudoflow $f$ is an $\epsilon$-optimal flow. Hence the min-cost algorithm is also correct (see [10]).

The procedure \textit{Refine} maintains a set $S$ containing all the active nodes.
The loop terminates when S becomes empty. The generic implementations of the procedure \textit{Refine} runs in $O(n^2 m)$ time, giving\\ an~$O(n^2 m \min\{\log(nC), m\log n\})$ time bound for computing a minimum-cost flow.

The differences between the algorithms mentioned above occur in a new definition of an admissible edge, in the initialization stage of the refine procedure, and in the relabel operation. 

Let $\epsilon > 0$. A residual edge $(x, y) \in E_f$ is \textbf{admissible}, if \mbox{$(x, y) \in {(X\times Y) \cap E_f}$} and $c_p(x, y) < \frac{1}{2} \epsilon$ or $(x, y) \in {(Y\times X) \cap E_f}$ and $c_p(x, y) < -\frac{1}{2} \epsilon$.
Then we have different conditions for the two types of edges.

The $\epsilon$-optimality notation is closely related to the admissible edge definition. Therefore after changing it we must later update the former, as~well. 
For a constant $\epsilon \geq 0$ a pseudoflow $f$ is \textbf{$\epsilon$-optimal} with respect to~a~price function $p$ if, for every residual edge $(x, y) \in {(X \times Y) \cap E_f}$, we have $c_p(x, y) \geq 0$, and for every residual edge $(y, x) \in {(Y \times X) \cap E_f}$, we have $c_p(y, x) \geq -\epsilon$. 
For a constant $\epsilon \geq 0$ a pseudoflow $f$ is $\epsilon$-optimal if it is $\epsilon$-optimal with respect to some price function $p$.
\\
\\
\begin{scriptsize}
\begin{tabular}{l}

\hline

\textbf{Algorithm 5.1. \textit{The cost scaling algorithm, version 2}} \\

\hline\\

1. \textit{Refine($\epsilon$, $p$):}\\[3pt]
2. $\epsilon \leftarrow \epsilon / \alpha$\\
3. for each $(x, y) \in E$\\  
4. \hspace{0.5cm}$f(x, y) \leftarrow 0$\ \ \ \ \ \ \ \ \ \ /*it can make some node active*/\\
5. for each $x \in X$\\
6. \hspace{0.5cm}$p(x) \leftarrow -min_{(x,y) \in E} c_p'(x, y)$\\
7. while (f is not a flow) do\\
8. \hspace{0.5cm}apply a \textit{push} or a \textit{relabel} operation\\
9. return ($\epsilon$, $f$, $p$)\\[10pt]

1. \textit{relabel(x);}\\[3pt]
2. if $x \in X$ then\\
3. \hspace{0.5cm}$p(x) \leftarrow max_{(x, y) \in E_f}{\{p(y) - c(x, y)\}}$\\
4. else if $x \in Y$ then\\
5. \hspace{0.5cm}$p(x) \leftarrow max_{(z, x) \in E_f}{\{p(z) + c(z, x) - \epsilon\}}$
\\[3pt]\hline

\end{tabular}
\end{scriptsize}
\\
\\
\\

On the start of the procedure refine, $\epsilon$ and an $\epsilon$-optimal flow $f$ are given. Similary as in the first algorithm, $\epsilon$ is decreased to $\epsilon / \alpha$. Next, for~each saturated edge $(x, y)$, the flow is removed from $(x, y)$ back to its node $x$. This makes some nodes active and some nodes obtain negative excess: $e(x) < 0$. Before the loop starts, for each node $x \in X$, $p(x)$ is set to~$-min_{(x,y) \in E} c_p'(x, y)$. Then, we get an $\epsilon$-optimal pseudoflow $f$, for $\epsilon = 0$ (because of $\forall{(x,y) \in E_f\cap (X\times Y)}\ \ c_p(x, y) \ge 0$ and $\forall{(y,x) \in E_f\cap (Y\times X)}\\ c_p(y, x)\ge -\epsilon$). Further, the push and the relabel operations are performed to make $f$ an $\epsilon$-optimal flow. 

It is easy to see that the differences between the algorithms do not result in any change in the returned output but they have impact on the efficiency. Therefore our idea is a combination these two codes (Algorithm 5.2.). We assume the first version of the definitions of admissible edges and $\epsilon$-optimal pseudoflows.
\\
\\
\begin{scriptsize}
\begin{tabular}{l}

\hline

\textbf{Algorithm 5.2. \textit{The cost scalling algorithm, our approach}} \\

\hline\\

1. \textit{Refine($\epsilon, p$):}\\[3pt]
2. $\epsilon \leftarrow \epsilon / \alpha$\\
3. for each $(x, y) \in E$\\
4. \hspace{0.5cm}$f(x, y) \leftarrow 0$\ \ \ \ \ \ \ \ \ \ /*it can make some node active*/\\
5. for each $x \in X$\\
6. \hspace{0.5cm}$p(x) \leftarrow -min_{(x,y) \in E}{\{c_p'(x, y) + \epsilon \}}$ /*it makes pseudoflow $f$ $\epsilon$-optimal, (also $0$-optimal)*/\\
7. while (f is not a flow) do\\
8. \hspace{0.5cm}apply a \textit{push} or a \textit{relabel} operation\\
9. return ($\epsilon$, $f$, $p$)\\[10pt]

1. \textit{relabel(x):}\\[3pt]
2. $p(x) \leftarrow -min_{(x, z) \in E_f}\{c_p'(x, z) + \epsilon\}$\\

\\[3pt]\hline

\end{tabular}
\end{scriptsize}
\\
\\

The initialization of the procedure refine (lines 2-6) is the~same as~in~Algorithm 5.1 except line 6, where we must adapt it to the appropriate as~definition the admissible edge. The relabel operation is the same as in Algorithm 5.0.

		\subsection{Heuristics: Price Updates and Arc Fixing}

The papers [8] and [9] provide a lot of inprovements to the cost scaling algorithm for its sequential version. We focus our attention on the price updates and arc fixing heuristics.
The idea of the price updates heuristic (introduced in [10] and described also in [15]) is similar to Dijkstra's shortest path algorithm, implemented using buckets as in Dial's implementation [16] of the shortest path algorithm.
\\
\\
\begin{scriptsize}
\begin{tabular}{l}

\hline

\textbf{Algorithm 5.3. Price updates heuristic} \\

\hline\\

\textit{price-updates-heuritic():}\\[3pt]
make empty sets $B[\cdot]$ /* buckets of unscanned nodes*/\\
make empty set $S$  /* set of active nodes */\\
make array $l$ /* constisted the labels of nodes*/\\
for each $x \in V$\\
\hspace{0.5cm}$x$.scanned $\leftarrow false$\\
\hspace{0.5cm}$l(x) \leftarrow \infty$\\
\hspace{0.5cm}if $e(x) < 0$ then\\
	\hspace{1.2cm}$B[0]$.push($x$) /* bucket $0$ is consisted of the $x$ which excess is negative*/\\
	\hspace{1.2cm}$x$.bucket $\leftarrow 0$\\
\hspace{0.5cm}else\\
	\hspace{1.2cm}$x$.bucket $\leftarrow \infty$\\
	\hspace{1.2cm}if ($e(x) > 0$) then\\
		\hspace{2.0cm}$S$.push($x$) /* active node $x$ is pushed to $S$ */\\
$i \leftarrow 0$\\
while ($S$ is not empty) do\\
\hspace{0.5cm}while ($B[i]$ is not empty) do\\
	\hspace{1.2cm}$x \leftarrow B[i]$.pop()\\
	\hspace{1.2cm}for each $(y, x) \in E_f$ do\\
		\hspace{2.0cm}if ($y$.scanned $= false$ and $\lfloor c_p(y, x) / \epsilon \rfloor + 1 < y$.bucket) do \\
			\hspace{2.5cm}$old \leftarrow y$.bucket\\
			\hspace{2.5cm}$new \leftarrow \lfloor c_p(y, x) / \epsilon \rfloor + 1$\\
			\hspace{2.5cm}$y$.bucket $\leftarrow new$\\
			\hspace{2.5cm}$B[old]$.pop($y$)\\
			\hspace{2.5cm}$B[new]$.push($y$)\\
	\hspace{1.2cm}$x$.scanned $\leftarrow true$\\
	\hspace{1.2cm}$l(x) \leftarrow i$\\
	\hspace{1.2cm}if ($e(x) > 0$) then\\
		\hspace{2.0cm}$S$.pop($x$)\\
	\hspace{1.2cm}$last \leftarrow i$\\
\hspace{0.5cm}$i \leftarrow i+1$\\
for each $x \in V$\\
\hspace{0.5cm}if $l(x) < \infty$\\
	\hspace{1.2cm}$p(x) \leftarrow p(x) - \epsilon\ l(x)$\\
\hspace{0.5cm}else\\
	\hspace{1.2cm}$p(x) \leftarrow p(x) - \epsilon\ (last + 1)$\\
\\[3pt]

\hline

\end{tabular}
\end{scriptsize}
\\
\\

In our implementation, \textit{the price update heuristic} maintains the~set of~buckets $B[\cdot]$. Each node belongs to at most one bucket. Any~node with~a~negative excess is in $B[0]$ and initially buckets with the number larger than~$0$ are~empty. Therefore, each node with a non-negative excess belongs to~number~$\infty$. In general, for any node the number of its bucket~equals its~distance from any node residing in the 0-bucket. Additionaly, the heuristic mainatains an array $l$ containing of the labels, which are distances to~some~node with~negative excess i.e. a node residing in the 0-bucket. The distances are multiplies of $\epsilon$.

In each iteration of algorithm, a node with the lowest distance (ie.~a~node residing in a nonempty bucket with the lowest number) is scanned. After~the~scanning, the distance of that node is set to the number of its bucket and the node is marked as scanned. 

During the scanning, residual edges entering the scanned node are tested. If~a~neighbor was not scanned yet and its new calculated distance is smaller than the present, then its distance is updated to a new and this~neighbor is~removed from the old bucket and put into to a new bucket.
\\
\\

In our implementation we also use the \textit{arc heuristic} described in [8]. It~deletes some edges from a residual graph thus decreasing the running time.
For the $\epsilon$-optimal flow $f$ and edge $e$, if $c_p(e) > 2n\epsilon$ then~the~flow of~$e$~will never be changed. Therefore this edge can be permanently omitted.

	\subsection{Applying Lock-Free Push-Relabel Method to Cost Scaling Algorithm}

To get better running time of the \textit{Refine} procedure than in a sequential algorithm, we improve it by applying the lock-free push-relabel algorithm (Algorithm 5.4). 
\\
\\
\begin{scriptsize}
\begin{tabular}{l}
\hline

\textbf{Algorithm 5.4. \textit{Push-relabel, our approach}} \\

\hline\\

\textit{refine($G, \epsilon, p$):}\\[3pt]
1. /* x - node operated by the x thread */\\
2. while ($e(x) > 0$) do\\
3. \hspace{0.5cm}$e' \leftarrow e(x)$\\
4. \hspace{0.5cm}$\tilde{y} \leftarrow NULL$\\
5. \hspace{0.5cm}$min\_c_p' \leftarrow \infty$\\
6. \hspace{0.5cm}for each $(x, y) \in E_f$ do\\
7.        \hspace{1.2cm}$tmp\_c_p' \leftarrow c_p'(x, y)$\\
8.        \hspace{1.2cm}if $min\_c_p' > tmp\_c_p'$ do\\
9.        	\hspace{2.0cm}$min\_c_p' \leftarrow tmp\_c_p'$\\
10.		\hspace{2.0cm}$\tilde{y} \leftarrow y$\\
11.\hspace{0.5cm}if ($min\_c_p' < -p(x)$) do /* that is: $c_p(x, \tilde{y}) < 0 $ - edge is admissible */ \\
12.	\hspace{1.2cm}/* then the x thread performs PUSH towards $\tilde{y}$ */\\
13.	\hspace{1.2cm}$u_f(x, \tilde{y})\ \leftarrow u_f(x, \tilde{y}) - 1$\\
14.	\hspace{1.2cm}$u_f(\tilde{y}, x)\ \leftarrow u_f(\tilde{y}, x) + 1$\\
15.	\hspace{1.2cm}$e(x) \leftarrow e(x) - 1$\\
16.	\hspace{1.2cm}$e(\tilde{y}) \leftarrow e(\tilde{y}) + 1$\\
17.\hspace{0.5cm}else do / *then the x thread performs RELABEL */\\
18.	\hspace{1.2cm}$p(x) \leftarrow -(min\_c_p' + \epsilon)$\\
\\[3pt]\hline

\end{tabular}
\end{scriptsize}
\\
\\

As in Hong's algorithm, we assume that one node can be operated by~at~most one thread. Each thread has the following private atributes. The~variable $e'$ stores the excess of the node $x$. The variable $\tilde{y}$ stores the~node with~the~lowest reduced cost. The variable $min\_c_p'$ stores the lowest partially reduced cost of the edge $(x, \tilde{y})$. The variable $tmp\_c_p'$ stores the temporarily reduced cost of the edge $(x, y)$. The arrays with excesses $e$ and prices $p$ of~nodes, and costs $c$ and residual capacity $u_f$ of edges are shared beetwen all the~running threads.

Let $x$ be an active node operated by thread $x$. Before performing \textit{push} or \textit{relabel} operation, $x$ select the residual edge $(x, y) \in E_f$, whose reduced cost is the lowest among all residual edges outgoing from $x$ (lines 6-10). Let $(x, \tilde{y})$ be the edge with the reduced cost $min\_c_p'$. The thread $x$ verifies whether $(x, \tilde{y})$ is admissible (line 11). If so, $x$ puts the unit of flow towards $\tilde{y}$: decreasing excess of $x$ and residual capacity of $(x, \tilde{y})$, and increasing excess of $\tilde{y}$ and residual capacity of $(\tilde{y}, x)$ (lines 13-16). Otherwise, if $(x, \tilde{y})$ is not admissible, the relabel operation is performed on $x$ and new price of $x$ is $p(x) \leftarrow -(min\_c_p' + \epsilon)$ (line 18).

Obviosly, the decreasing and increasing variables in the push operation must be atomic because of the write conficts, which may occur when two threads will want to change excess of the same node.

		\subsection {Correctness of Algorithm}

The correctness of the algorithm follows from the correctness of the \textit{Refine} procedure.

The main difference between our algorithm and the push-relabel algorithm for the max flow problem is that in the second method for each node, each of its neighbors sees the same height of that node. In our algorithm for~each~node, each of its neighbors sees the same price of that node but~to~determine the~$\epsilon$-optimality edge, the reduced cost of an edge (not~the~price of the node) must be considered, which can be different for each neighbors.

Our proof is based on Hong's proof of the correctness of the lock-free push-relabel algorithm [5] and Goldberg and Tarjan's proof of the correctness of the generic refine subroutine [2].

\begin{lm}
During the execution of the refine operation, for any active node one of two operations: $push$ or $relabel$ can be applied.
\end{lm}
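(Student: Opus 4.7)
The plan is to reduce the claim to a single structural fact: whenever $e(x) > 0$, the outgoing residual neighborhood $\{y : (x,y) \in E_f\}$ is non-empty. Granting this, the loop in lines 5--10 of Algorithm 5.4 selects a well-defined minimizer $\tilde y$ of $c_p'(x,\cdot)$ over that neighborhood; the test in line 11 rewrites the admissibility condition $c_p(x, \tilde y) < 0$ as $min\_c_p' < -p(x)$; and exactly one of the two branches applies. The push branch is legal because in the reduced instance $I'$ all forward capacities are unit, so the residual edge to $\tilde y$ has capacity at least one and the excess can be decremented by $1$. The relabel branch is legal because $-(min\_c_p' + \epsilon)$ is a finite real number and thus a valid new price.

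The substantive step is therefore to show $e(x) > 0$ implies that the outgoing $E_f$-neighborhood of $x$ is non-empty. I will prove the contrapositive. Suppose no outgoing residual edge exists at $x$. By the definition of $E_f$ this means every original edge $(x, y) \in E$ is saturated with $f(x, y) = u(x, y)$, and every edge $(z, x) \in E$ carries $f(z, x) \leq 0$. Writing $e(x) = e_0(x) + \sum_{(z, x) \in E} f(z, x) - \sum_{(x, y) \in E} f(x, y)$, where $e_0(x)$ is the initial excess installed by the assignment-to-min-cost-flow reduction ($e_0(x) = +1$ for $x \in X$, $e_0(x) = -1$ for $x \in Y$), one obtains $e(x) \leq e_0(x) - \sum_{(x, y) \in E} u(x, y)$. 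Using the capacity structure of $I'$, namely unit forward capacities on $X \times Y$ and zero reverse capacities on $Y \times X$, this upper bound is at most $0$ in both the $x \in X$ and $x \in Y$ cases, contradicting $e(x) > 0$.

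The main obstacle I expect is the lock-free, asynchronous setting, which makes ``$e(x)$ and $E_f$ as seen by thread $x$'' a possibly stale snapshot of the global state. The resolution is to observe that the bound $e(x) \leq e_0(x) - \sum u(x,y)$ relies only on invariants that every atomic push or relabel preserves: a push changes $e(x)$ and a residual capacity $u_f(x, \tilde y)$ by the same integer amount, while a relabel touches neither $e$ nor $u_f$. Hence at every atomic moment the inequality holds, so whenever a thread reads $e(x) > 0$ it is guaranteed that at that moment $x$ has some outgoing residual edge. Borrowing Hong's argument from [5], the scan in lines 6--10, even if racing against simultaneous updates by other threads, will therefore necessarily return some candidate $\tilde y$ on which one of push or relabel can legitimately proceed.
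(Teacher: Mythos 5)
Your proposal is correct, and its skeleton matches the paper's: first show that an active node has at least one outgoing residual edge, then observe that the minimizer $\tilde y$ either is admissible (so a push applies) or is not (so a relabel applies, since no residual out-edge can be admissible if the cheapest one is not). The difference lies in how the key existence step is justified. The paper argues historically: $x$ became active because some $push(z,x)$ occurred, and that push also increased $u_f(x,z)$, so the reverse residual edge $(x,z)$ exists. You argue from the current state: if every residual out-edge of $x$ were absent, then every forward edge would be saturated and every reverse edge would carry nonpositive flow, forcing $e(x)\le e_0(x)-\sum_{(x,y)\in E}u(x,y)\le 0$ by the unit-capacity structure of the reduction. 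Your version is more robust in two respects. First, it covers the nodes of $X$ that are active immediately after the initialization $f\equiv 0$ of \emph{Refine}, where no push into them has yet occurred and the paper's stated justification does not literally apply (though the conclusion still holds there because all unit-capacity out-edges are residual). Second, being an invariant of the state rather than of the execution history, it transfers cleanly to the lock-free setting, and you correctly note that concurrent pushes by other threads only increase $e(x)$ and $u_f(x,\cdot)$, so the invariant cannot be violated at the moment thread $x$ tests $e(x)>0$. The cost of your route is that it leans on the specific capacities of the instance $I'$ (in particular it silently needs every $x\in X$ to have degree at least one, which holds for the complete bipartite graphs considered here but would fail for an isolated vertex), whereas the paper's reverse-edge argument is capacity-agnostic. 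Both arguments are sound for the setting of the paper.
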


\begin{proof}
Let $x$ be an active node. Then there must be an edge $(x, y) \in E_f$ because a push operation has occured which increased the value of~$e(x)$ and~$u_f(x, y)$. Let $c_p(x, y) >= 0$. Hence $(x, y)$ is not admissible and a push operation is impossible. Therefore a relabel operation can be applied.
\end{proof}

\begin{lm}
During the execution of the refine operation the price of a node never increases.
\end{lm}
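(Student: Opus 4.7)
The plan is to induct on the sequence of \textit{push} and \textit{relabel} events generated by all threads during one execution of \textit{refine}. Only the relabel in line~18 of Algorithm~5.4 writes to the price array, so the push case is vacuous and the whole argument reduces to: each relabel of a node $x$ strictly decreases $p(x)$.

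For that step I would first rewrite the update in line~18 symbolically. Let $\hat p(y)$ denote the value of $p(y)$ that thread $x$ actually read on line~7 when computing $min\_c_p'$, so that $min\_c_p' = \min_{(x,y)\in E_f}\bigl(c(x,y)-\hat p(y)\bigr)$ and the new price is $p_{new}(x) = -(min\_c_p'+\epsilon) = \max_{(x,y)\in E_f}\bigl(\hat p(y)-c(x,y)\bigr)-\epsilon$. Next I would exploit the guard of the \textbf{else} branch: the relabel is taken exactly when the test on line~11 fails, i.e.\ $min\_c_p' \geq -p(x)$, which is equivalent to $\hat p(y)-c(x,y)\leq p(x)$ for \emph{every} residual neighbour $y$ examined in the scan. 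Taking the max over those $y$ and subtracting $\epsilon$ gives $p_{new}(x)\leq p(x)-\epsilon$; since $\epsilon>0$ throughout \textit{refine} (it is only ever halved), this yields $p_{new}(x)<p(x)$, which simultaneously proves the lemma and the inductive step.

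The delicate point, and the one I would highlight explicitly, is the lock-free aspect: while thread $x$ is in the middle of its scan the values $p(y)$ may be changing, and by the time line~18 fires the current $p(y)$ may differ from the snapshot $\hat p(y)$ that was used to build $min\_c_p'$. The key observation that rescues the argument is that only thread $x$ ever writes $p(x)$, so the condition on line~11 and the update on line~18 are evaluated against the \emph{same} cached quantity $min\_c_p'$ and the \emph{same} $p(x)$. The strict inequality $p_{new}(x)<p(x)$ therefore depends only on the internal consistency of thread $x$'s snapshot, not on what other threads do to the $p(y)$'s concurrently, so no extra synchronisation hypothesis is needed. This mirrors exactly the structure of Hong's argument that the lock-free push-relabel method preserves its height invariant, which is the analogy the authors appeal to when they say the proof is ``based on Hong's proof''.
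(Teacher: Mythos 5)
Your proposal is correct and follows essentially the same route as the paper: push operations do not touch prices, and for a relabel the failed guard $min\_c_p' \geq -p(x)$ gives $p(x) \geq -c_p'(x,\tilde y)$, so the new price $-(c_p'(x,\tilde y)+\epsilon)$ is at most $p(x)-\epsilon$. Your explicit handling of the stale-snapshot issue (that the guard and the update are evaluated against the same cached $min\_c_p'$ and that only thread $x$ writes $p(x)$) is a point the paper's proof leaves implicit, but the core algebra is identical.
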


\begin{proof}
Let $x$ be an active node. If there is a residual edge $(x, y)$\\ such~that~$c_p(x, y)~<~0$ then the push operation can be applied to it and~the~price of~$x$~does not increase. 

Otherwise, if for each residual edge $(x, y)$, $c_p(x, y) >= 0$ then the relabel operation can be applied to $x$. Let $(x, \tilde{y}) \in E_f$ have the smallest reduced cost among all the residual edges outgoing from $x$. Since $c_p(x, \tilde{y}) >= 0$ then $c(x, \tilde{y}) + p(x) - p(\tilde{y}) >= 0$ and $p(x) >= p(\tilde{y}) - c(x, \tilde{y}) = - c_p'(x, \tilde{y})$. Additionaly, after the relabel operation we have $p(x) = -(c_p'(x, \tilde{y}) + \epsilon)$. Therefore the price of $x$ does not increase.
\end{proof}

Now, following Hong's proof, we define a trace, a preparation stage and~a~fulfillment stage and two basic types of computation (history) traces consisting of push and/or relabel operations.

The \textit{trace} of the interleaved execution of multiple threads is the~order in~which instructions from the threads are executed in real time.

Each operation (both push or relabel), can be split into 2~stages: the~preparation and the fulfillment. 

For the push operation the preparation stage is performed in lines 6-11 of~Algorithm 5.4 and the fulfillment stage is lines 12-16.
For the relabel operation the preparation stage is in lines 6-10 of Algorithm 5.4 and~the~fulfillment stage is in line 18.

The preparation stage tests if the operation is aplicable. The fulfillment stage finishes the operation. The P and F notations denote the preparation and the fulfillment stage respectively.

A \textit{stage-clean} trace consists of non overlapping operations, for example: (P(push1), F(push1), P(relabel1), F(relabel1), P(push2), F(push2)).\\
In \textit{stage-stepping} trace all the preparation stages are performed before any~fulfillment stage, for example: (P(push1), P(relabel2), P(relabel1), F(relabel2), F(push1), F(relabel1)).

\begin{lm}
Each trace consisting of push and/or relabel operations is semantically equivalenty to one of the two basic traces: stage-clean or stage-stepping.
\end{lm}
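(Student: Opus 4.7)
The plan is to argue by structural analysis of an arbitrary trace $T$. First I would record two invariants of Algorithm~5.4 that the proof will lean on: (i) inside any single thread $x$, the preparation stage $P$ of an operation strictly precedes its fulfillment stage $F$, simply by the linear code layout in lines 2--18; and (ii) the relabel of a node $x$ is performed only by thread $x$, so no two relabels on the same vertex ever overlap. Combined with the fact that the writes in the fulfillment stages (lines 13--16 and line 18) are all atomic read-modify-writes on well-defined memory locations (the \textit{atomicAdd}/\textit{atomicSub} primitives from Section 2), these invariants give us enough determinism to talk about the ``effect'' of a trace independently of its precise interleaving.

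I would then split into two cases depending on the shape of $T$. If for every pair of distinct operations $o_1, o_2$, the fulfillment of one completes before the preparation of the other begins, then $T$ is already stage-clean after sorting operations by their start times; semantic equivalence is immediate because non-overlapping operations commute trivially in the pointwise semantics on $e$, $p$, and $u_f$. Otherwise there exist operations $o_1, o_2$ with $P(o_2)$ starting before $F(o_1)$ completes; in this entangled case the claim is that $T$ is equivalent to the stage-stepping trace obtained by first executing every preparation (in the order they originally started), and only afterwards executing every fulfillment (again in start-time order). The crucial observation is that a preparation is read-only on the shared state: it merely inspects $c_p'(x,y)$ and $u_f(x,y)$ for the outgoing residual edges and writes only to the thread-private variables $e'$, $\tilde{y}$, $min\_c_p'$. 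Therefore preparations commute with one another and can all be migrated to the front of the trace.

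The main obstacle will be verifying that this migration does not change the outcome of the subsequent fulfillments. Two hazards must be addressed. First, a preparation moved earlier might read a different value of $u_f(x,y)$ or $p(y)$ than in the original trace; here I would appeal to Lemma 5.2 (prices only decrease) and to the fact that atomic increments and decrements of $e$ and $u_f$ commute, so that the choice of $\tilde{y}$ made from thread $x$'s local viewpoint still produces a legitimate push or relabel. Second, fulfillments from distinct threads might write to the same memory cells; commutativity of atomic additions on a single address guarantees that the final values of $e$ and $u_f$ agree between the original entangled trace and the stage-stepping rearrangement. Assembling these two observations yields semantic equivalence in the entangled case, and together with the easy case above, this establishes the lemma.
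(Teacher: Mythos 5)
There is a genuine gap, and it sits exactly where your proposal does the real work: the claim that an arbitrary entangled trace is semantically equivalent to the single stage-stepping trace obtained by migrating \emph{every} preparation to the front. Moving a preparation past a fulfillment that originally preceded it changes what that preparation reads: the set of residual edges $(x,y)\in E_f$ it iterates over, the values $c_p'(x,y)$ it compares, and hence the neighbour $\tilde{y}$ it selects and the price $-(min\_c_p'+\epsilon)$ its fulfillment later writes. Concretely, suppose $push(x,y)$ completes entirely, thread $y$ then begins $push(y,\tilde{z})$ whose preparation sees the new residual edge $(y,x)$ created by the first push, and some third operation overlaps with $push(y,\tilde{z})$. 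Your recipe drags $P(push(y,\tilde{z}))$ to the very front, where $(y,x)$ is not yet residual and a different $\tilde{y}$ (or no admissible edge at all) would be found. That is a different computation, not an equivalent one. Lemma 5.2 and the commutativity of atomic increments control the \emph{final values} of $e$ and $u_f$ once the operations are fixed, but they do not control \emph{which operations} the preparations decide to perform, and that is the quantity your migration perturbs. Your dichotomy is also too coarse in the other direction: the lemma (read in context with Lemma 5.4) is a statement about a pair of overlapping operations, and a longer trace is only claimed to decompose into a \emph{concatenation} of stage-clean and stage-stepping segments, not to collapse to one global basic trace.

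The paper's proof takes a much more local route: it enumerates the five ways two operations can touch common data ($push(x,y)$ with $push(y,z)$, with $push(z,y)$, with $relabel(y)$, with $relabel(z)$, and $relabel(x)$ with $relabel(y)$), lists the admissible interleavings of the four stage events, and for each one identifies the specific canonical trace it equals --- e.g.\ $P(push(y,z))$--$F(push(x,y))$--$F(push(y,z))$ is equivalent to the stage-clean order $push(y,z)$ then $push(x,y)$ precisely because the preparation of $push(y,z)$ captured the old value of $e(y)$, and the $push(x,y)$/$relabel(z)$ case is defused by the monotonicity of prices (your one correct appeal to Lemma 5.2, but carried through an explicit inequality $\widetilde{c_p}(x,z)\ge c_p(x,z)\ge c_p(x,y)$). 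That per-case identification of \emph{which} canonical trace each interleaving matches is not incidental bookkeeping; it is the input that Lemma 5.5 consumes when verifying $\epsilon$-optimality. Your observations that preparations are read-only and that atomic updates to a common cell commute are correct and are implicitly used in the paper's cases 2 and 3, but on their own they do not substitute for the case analysis.
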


\begin{proof}
There are 5 nontrivial cases in which the stages of push and relabel operations on common data can interleave:

\begin{enumerate}
\item \textit{push(x, y)} and \textit{push(y, z)}

The fulfillment stage of $push(x, y)$ increases $e(y)$. However the preparation stage of $push(y,z)$ reads $e(y)$. The reading and the writing operations can occur in the following three scenarios.

\begin{enumerate}
\item $P(push(x, y))$ - $F(push(x, y))$ - $P(push(y, z))$.

This is a stage-clean trace: push(x, y) -> push(y, z). In~the~fulfillment of $push(x, y)$ the thread $x$ increases $e(y)$ before~the~thread $y$ reads $e(y)$ in the preparation stage.

\item $P(push(y, z))$ - $F(push(y, z))$ - $F(push(x, y))$.

This is a stage-clean trace: push(y, z) -> push(x, y). In~the~fulfillment of $push(y, z)$ thread $y$ decreases $e(y)$ before~the~thread $x$ writes to the $e(y)$ in fulfillment stage.

\item $P(push(y, z))$ - $F(push(x, y))$ - $F(push(y, z))$.

This is also a stage-clean trace: push(y, z) -> push(x, y). In~the~fulfillment of $push(x, y)$, the thread $x$ increases $e(y)$, which was earlier remembered by thread $y$. Next, thread $y$ pushes the flow, whose value is depended from the old value of $e(y)$, to the node $z$. So it is equvalent that the thread $y$ pushes the flow to $z$ and~then~the~push operation from $x$ to $y$ is performed.

\end{enumerate}

\item \textit{push(x, y)} and \textit{push(z, y)}

Both $push(x,y)$ and $push(z,y)$ increase the value of $e(y)$ without reading and storing this value before. It may occur in two different scenarios, each of them is equivalent to a stage-clean trace.

\item \textit{push(x, y)} and \textit{relabel(y)}

In the preparation of $push(x,y)$ the thread $x$ chooses the edge $(x, y)$ because it has the lowest partially reduced cost $c_p'(x, y)$. However, in~the~fulfillment stage the thread $y$ updates the price $p(y)$. We have six different scenarios interleaving the stages of $push(x, y)$ and $relabel(y)$.

\begin{enumerate}

	\item $P(push(x, y))$ - $F(push(x, y))$ - $P(relabel(y))$ - $F(relabel(y))$

Obvious. It is equivalent to the stage-clean trace: $push(x, y)$ -> $relabel(y)$.

	\item $P(relabel(y))$ - $F(relabel(y))$ - $P(push(x, y))$ - $F(push(x, y))$

Obvious. It is equivalent to the stage-clean trace: $relabel(y)$ -> $push(x, y)$.

	\item 	$P(push(x, y))$ - $P(relabel(y))$ - $F(push(x, y))$ - $F(relabel(y))$\\
		$P(push(x, y))$ - $P(relabel(y))$ - $F(relabel(y))$ - $F(push(x, y))$\\
		$P(relabel(y))$ - $P(push(x, y))$ - $F(relabel(y))$ - $F(push(x, y))$\\
		$P(relabel(y))$ - $P(push(x, y))$ - $F(push(x, y))$ - $F(relabel(y))$

In this cases two preparation stages are performed before any fulfillment stage one. Therefore they are equivalent to the stage-stepping~trace: $P(push(x, y)$ -> $P(relabel(y))$ -> $F(push(x, y))$ -> $F(relabel(y))$.

\end{enumerate}

\item \textit{push(x, y)} and \textit{relabel(z)}

Let $\tilde{p}(z)$ be the new price of $z$ after the fulfillment stage of $relabel(z)$.
According to lemma 4.2,: $\tilde{p}(z) \le p(z)$ must hold.\\ 
We have $c_p(x, y) = \min_{(x, w) \in E_f}c_p(x, w) \le c_p(x, z)$.\\ 
Then $c_p(x, z) = c(x, z) + p(x) - p(z) \le c(x, z) + p(x) - \tilde{p}(z) = \widetilde{c_p}(x, z)$.
Since $\widetilde{c_p}(x, z) \ge c_p(x, z) \ge c_p(x, y) \ge 0$, then the operation $relabel(z)$ does not impact the operation $push(x, y)$ and this scenario is equivalent to a stage-clean trace.

\item \textit{relabel(x)} and \textit{relabel(y)}

In the fulfillment stage of $relabel(x)$ the thread $x$ updates $p(x)$ and~$(y, x)$ may be the residual edge read by the thread $y$ before or after the~fulfilment stage of $relabel(y)$. Then there are six scenarios interleaving the stages of $relabel(x)$ and $relabel(y)$. The proof is analogous to~the~case~3.

\end{enumerate}
\end{proof}

\begin{lm}
For any trace consisting of three or more push and/or relabel operations there exists an equivalent sequence consisting only of stage-clean or stage-stepping traces.
\end{lm}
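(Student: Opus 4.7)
My plan is to proceed by induction on the number $k$ of push/relabel operations in the trace, with Lemma 4.3 (the two-operation case) serving as the base. The idea is that any longer trace can be cut into consecutive segments, each of which is either stage-clean or stage-stepping, by repeatedly applying the pairwise equivalences of Lemma 4.3 to commute independent stages.

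For the inductive step, I would consider a trace $T$ on $k \ge 3$ operations $O_1,\dots,O_k$ (ordered by the real-time position of their preparation stages). I would look at $O_1$ together with the operations whose stages are interleaved with $F(O_1)$, and identify the smallest prefix $T'$ of $T$ that ends with $F(O_1)$ and contains no stage of any operation outside this prefix. Within $T'$ the operations are pairwise entangled with $O_1$ in the sense of Lemma 4.3, and I would apply that lemma repeatedly to each pair $(O_1,O_j)$, $(O_1,O_\ell)$, $\dots$ to normalize $T'$ into either a stage-clean ordering (if every pair commutes to stage-clean form) or a stage-stepping block (if at least one pair must be stage-stepping, in which case all preparation stages in $T'$ must be pushed before all fulfillment stages in $T'$). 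The remaining suffix of $T$ has fewer operations, so the inductive hypothesis applies and yields a decomposition of that suffix into stage-clean and stage-stepping segments; concatenating gives the desired decomposition of $T$.

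The main obstacle I expect is the mixed case inside $T'$: when $O_1$ is in a stage-stepping relation with one neighbour but in a stage-clean relation with another. Lemma 4.3 only handles pairs, so I would need to check that the semantic equivalences it produces are compatible when chained, i.e.\ that the reorderings suggested by two different pairs do not contradict each other on their shared operation $O_1$. The way I would handle this is by observing that every pairwise stage-stepping case in Lemma 4.3 already subsumes the corresponding stage-clean one (both prepare stages come first, both fulfill stages come last), so if any pair in $T'$ is forced into stage-stepping form the whole block $T'$ can be placed in stage-stepping form without breaking the other pairwise equivalences. This monotonicity is what makes the block-by-block decomposition well defined.

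A secondary point to verify is that the cutting procedure terminates and really exhausts $T$: since each segment strictly consumes at least one operation and we peel from the left, after finitely many steps the trace is fully decomposed. The argument uses only the shared-data dependencies described in the proof of Lemma 4.3, namely that two operations are ``entangled'' precisely when they read or write a common excess $e(\cdot)$, price $p(\cdot)$, or residual capacity $u_f(\cdot,\cdot)$, so operations touching disjoint data can be freely separated into their own stage-clean singletons.
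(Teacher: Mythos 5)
The paper's own proof of this lemma is the single phrase ``Similar as above,'' i.e.\ it defers entirely to the two-operation case (Lemma 5.3, which you cite as 4.3) and never explains how the pairwise equivalences are to be chained. Your induction-plus-peeling strategy is therefore the natural way to make the paper's intent precise, and your identification of the crux --- whether the reorderings produced by two different pairs are compatible on their shared operation --- is exactly the point the paper skips.

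That said, the way you discharge the crux does not quite work as stated. You claim that if any pair inside the block $T'$ is forced into stage-stepping form, the whole block can be placed in stage-stepping form. This fails when the block, as delimited by your prefix-closure rule, contains a fulfillment stage that genuinely precedes in real time the preparation stage of another operation entangled with the block: consider $P(O_1)\,P(O_2)\,F(O_1)\,P(O_3)\,F(O_2)\,F(O_3)$ where $O_1,O_2$ share data and $O_2,O_3$ share data. Here $P(O_3)$ has already observed the effect of $F(O_1)$, so you may not move it in front of $F(O_1)$ to form one stage-stepping block; yet your prefix-closure forces all three operations into a single block that is neither stage-clean nor stage-stepping. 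What is actually needed is to show that such a block is equivalent to a concatenation of smaller blocks, and the commutation that justifies drawing a boundary between $F(O_1)$ and the remainder is a genuinely three-way interleaving not covered by the pairwise lemma: it needs its own short case analysis on which datum $O_1$ and $O_3$ could share (an excess $e(\cdot)$, a price $p(\cdot)$, or a residual capacity $u_f(\cdot,\cdot)$), or an argument that transitively entangled operations need not be pairwise entangled and can therefore be separated. Your final paragraph gestures at exactly this disjoint-data observation; promoting it from a remark to the actual mechanism of the inductive step is the missing piece. The termination argument is fine.
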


\begin{proof}
Similar as above.  
\end{proof}

\begin{lm}
When the algorithm terminates $f$ is an $\epsilon$-optimal flow.
\end{lm}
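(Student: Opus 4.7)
The plan is to establish two separate facts: (i) upon termination, $f$ is a flow (no node is active), and (ii) the pseudoflow $f$ is $\epsilon$-optimal with respect to the current price function $p$. The first part is immediate from the termination condition of Algorithm 5.4: each thread exits its while loop only when $e(x) \leq 0$, and since the initialization sets $e(x) = 1$ for $x \in X$, $e(y) = -1$ for $y \in Y$, and both push and relabel preserve the total sum of excesses, termination forces $e(x) = 0$ for every $x \in V$, which is exactly the definition of a flow. So the real content of the lemma is showing $\epsilon$-optimality.

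For $\epsilon$-optimality, my plan is to prove it as a loop invariant: after the initialization of \textit{Refine} (Algorithm 5.2, lines 2--6), $f$ is $\epsilon$-optimal with respect to $p$ (in fact $0$-optimal, as noted in the comment on line 6), and every subsequent push or relabel operation preserves this property. For the \textbf{init step}, the definition $p(x) \leftarrow -\min_{(x,y)\in E}\{c_p'(x,y)+\epsilon\}$ immediately yields $c_p(x,y) = c(x,y)+p(x)-p(y) \geq -\epsilon$ for every residual edge out of $x\in X$; residual edges from $Y$ to $X$ do not exist at this moment since $u(y,x)=0$ initially. For a \textbf{push} on an admissible edge $(x,\tilde{y})$, the only new residual edge created is $(\tilde{y},x)$, and $c_p(\tilde{y},x) = -c_p(x,\tilde{y}) > 0 \geq -\epsilon$, so optimality is preserved. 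For a \textbf{relabel} of $x$, the new price $p(x) = -(min\_c_p' + \epsilon)$ is by construction the largest value making $c_p(x,z) \geq -\epsilon$ hold for every $(x,z)\in E_f$; incoming residual edges $(w,x)$ are unaffected in a way that violates the condition because $p(x)$ can only decrease (Lemma 5.2), which makes $c_p(w,x)$ increase.

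The hard part is the \textbf{parallel interleaving}: the above sequential reasoning assumes that the values read in the preparation stage of an operation are still valid when its fulfillment stage executes, which is not literally true in the lock-free setting. My plan is to invoke Lemmas 5.3 and 5.4, which show that any real-time trace of push/relabel operations is semantically equivalent to a sequence of stage-clean or stage-stepping blocks. For stage-clean blocks, the sequential invariant argument above applies verbatim. For stage-stepping blocks, I would walk through the same five interleaving cases enumerated in the proof of Lemma 5.3 and verify in each that the $\epsilon$-optimality invariant is preserved; the critical cases are (push$(x,y)$, relabel$(y)$) and (relabel$(x)$, relabel$(y)$), where a price change at one endpoint of an edge may invalidate the admissibility test performed by the other thread. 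The key technical point in these cases is that by Lemma 5.2 the intervening relabel can only \emph{decrease} $p(y)$, so a reduced cost $c_p(x,y)$ read before the relabel only becomes smaller afterward, preserving admissibility for push and preserving $c_p \geq -\epsilon$ for the relabel's own invariant.

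The main obstacle I anticipate is case 3 of the interleaving analysis: in the stage-stepping orderings of push$(x,y)$ interleaved with relabel$(y)$, the fulfillment of push$(x,y)$ may raise $e(y)$ while $p(y)$ has just been lowered by relabel$(y)$, so I must check carefully that the resulting state is consistent with $\epsilon$-optimality of \emph{both} the newly created residual edge $(y,x)$ and any residual edge incident to $y$ whose reduced cost depended on the old $p(y)$. Once this case analysis is complete, chaining operations via Lemma 5.4 and combining with the base case of init shows $\epsilon$-optimality holds at every reachable state, and together with part (i) this yields that at termination $f$ is an $\epsilon$-optimal flow.
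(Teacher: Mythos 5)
Your overall strategy---induction over the execution trace, reducing interleavings to stage-clean or stage-stepping blocks via Lemmas 5.3 and 5.4, and using the price monotonicity of Lemma 5.2---is the same as the paper's, and your part (i) (summing excesses to conclude that $f$ is a flow at termination) is a reasonable way to make explicit what the paper distributes between Lemmas 5.1 and 5.6.

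The gap is in the case you yourself flag as critical, $push(x,y)$ interleaved with $relabel(y)$ in a stage-stepping order, and your proposed resolution does not close it. First, the monotonicity argument points the wrong way: lowering $p(y)$ \emph{increases} $c_p(x,y)$ for edges into $y$ (harmless for the condition $c_p \ge -\epsilon$) and \emph{decreases} $c_p(y,\cdot)$ for edges out of $y$, which is where the danger lies. Second, and more importantly, your claim that ``$\epsilon$-optimality holds at every reachable state'' is false as stated: when the preparation stage of $relabel(y)$ executes before the fulfillment stage of $push(x,y)$ and $(y,x)\notin E_f$ at preparation time, the push subsequently creates a new residual edge $(y,x)$ that was not among the edges over which the relabel minimized, so the new price $\tilde{p}(y)$ can yield $c_p(y,x) < -\epsilon$. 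The paper's proof explicitly admits this as a transient violation and repairs it by observing that $(y,x)$ then has the smallest reduced cost among the residual edges leaving the now-active node $y$, so the very next operation at $y$ pushes the flow back along $(y,x)$ and removes it from $E_f$, eliminating the offending constraint before termination. Without weakening your invariant to tolerate this transient exception (or an equivalent device), the inductive step fails in exactly the case you identified as the hard one.
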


\begin{proof}
We show that during the execution of the algorithm any residual edge $(x, y)$ satisfies the constraint $c_p(x, y) \ge -\epsilon$ with one exception which is transient.
Let $f$ be an $\epsilon$-optimal pseudoflow. There may occur the following situations:

\begin{enumerate}

\item applying the push operation

Obviously.

\item applying the relabel operation

Let $x$ be the relabeled node and $\tilde{p}(x)$ be the new price of $x$. According to Lemma 5.2, we have $\tilde{p}(x) < p(x)$.
Then any edge of the form: $(x, y)$ or $(y, x)$ can spoil the $\epsilon$-optimality of $f$. For any residual edge $(x, y)$ outgoing from $x$, we have $c_p(x, y) = c(x, y) + p(x) - p(y) < c(x, y) + \tilde{p}(x) - p(y) = \widetilde{c_p}(x, y)$ Therefore, if $\widetilde{c_p}(x, y) \ge c_p(x, y) \ge -\epsilon$ then the relabel operation preserves the $\epsilon$-optymality.

\item applying $relabel(x)$ and $relabel(y)$ operations

According to Lemma 5.3 this scenario can be reduced to a stage-clean trace or stage-stepping trace. A clean-stage trace can be reduced to~cases: 1. and 2. above. A stage-stepping trace has the following cases.

\begin{enumerate} 

\item $(x, y) \in E_f \text{ and } (y, x) \in E_f$
	
In this case, $c_p(x, z) >= 0$, for all $(x, z) \in E_f$, and $c_p(y, w) >= 0$, for all $(y, w) \in E_f$. 
Therefore, since\\\mbox{$c_p(x, y) = c(x, y) + p(x) - p(y) = -(c(y, x) + p(y) - p(x)) = -c_p(y, x)$} then $c_p(x, y) = -c_p(y, x) = 0$.

Furthermore, $c_p(x, y) = \min{\{c_p(x, z),\ (x, z) \in E_f\}}$ and\\ $c_p(y, x) = \min{\{c_p(y, w),\ (y, w) \in E_f\}}$.

Therefore, the fulfillment stages of $relabel(x)$ and $relabel(y)$ update the prices of $x$ and $y$ respectively:\\ $\tilde{p}(x) \leftarrow - (c_p'(x, y) + \epsilon)$ and $\tilde{p}(y) \leftarrow - (c_p'(y, x) + \epsilon)$.\\

Let us check whether the new prices preserve $\epsilon$-optimality of~the~residual edges $(x, y)$ and $(y, x)$. 
Since\\ $\widetilde{c_p}(x, y) = c(x, y) + \tilde{p}(x) - \tilde{p}(y) = c(x, y) - c_p'(x, y) - \epsilon + c_p'(y, x) + \epsilon = 
c(x, y) - c(x, y) + p(y) + c(y, x) - p(x) = c(y, x) + p(y) - p(x) = c_p(y, x) = 0 >= -\epsilon$, then $(x, y)$ preserves $\epsilon$ - optimality of~the~pseudoflow $f$. Because of $\widetilde{c_p}(x, y) = -\widetilde{c_p}(y, x)$ then also $c_p(y, x) = 0$ and $(y, x)$ preserves $\epsilon$-optimality of $f$. 

\item $(x, y) \in E_f \text{ and } (y, x) \notin E_f$

We want to see if the edge $(x, y)$ preserves $\epsilon$-optimality of~the~pseudoflow $f$ after the fulfillment stages of $relabel(x)$ and $relabel(y)$. 
Let $(x, z)$ be the edge that affects on the new price $\tilde{p}(x)$ of $x$. 

Since $c_p(x, z) \le c_p(x, y)$ then $c_p'(x, z) \le c_p'(x, y)$ and\\
$\tilde{p}(x) = -(c_p'(x, z) + \epsilon) \ge -(c_p'(x, y) + \epsilon) = -c(x, y) + p(y) - \epsilon \ge -c(x,y) + \tilde{p}(y) - \epsilon$.\\
Therefore $\widetilde{c_p}(x, y) = c(x, y) + \tilde{p}(x) - \tilde{p}(y) \ge - \epsilon$. Q.E.D

\item $(x, y) \notin E_f \text{ and } (y, x) \in E_f$
	
	The proof is the same as above.

\item $(x, y) \notin E_f \text{ and } (y, x) \notin E_f$
	
	If $(x, y)$ and $(y, x)$ are not residual then after the relabel operations they will not be either. Therefore, this case is trivial.
 
\end{enumerate}

\item applying $push(x, y)$ and $push(z, y)$ operations

Following Lemma 5.3, this case is equivalent a stage-clean trace hence it reduces to the cases: 1 and 2.

\item applying $push(x, y)$ and $relabel(y)$ operations

This case is equivalent to either a stage-clean or a stage-stepping trace. The first scenario can be reduced to the cases 1 and 2, while the second has the following cases. Note $(x, y) \in E_f$ because the push operation is applicable.

\begin{enumerate}

\item $(y, x) \in E_f$

Since the push operation preserves $\epsilon$-optimality of the pseudoflow $f$, then from the inductive assumption, the pseudoflow $f$ is \mbox{$\epsilon$-optimal} after the push operation.

The fulfillment stage of $push(x, y)$ does not affect the~value~of~$c_p(y, x)$.~If $0 \le c_p(y, x) \le \min{\{c_p(y, z),\ (y, z) \in E_f\}}$ before the~fulfillment~stage of $relabel(y)$ then the new price of the node $y$ satisfies $\tilde{p}(y) = -(\min_{(y,z)\in E_f}{c_p'(y,z)} + \epsilon) \ge - (c_p'(y, x) + \epsilon)$.
Therefore, after the fulfillment stage of $relabel(y)$ we obtain: $\widetilde{c_p}(y, x) = c(y, x) + \tilde{p}(y) - p(x) \ge c(y, x) - c_p'(y, x) - \epsilon - p(x) = c(y,x) - c(y,x) + p(x) - \epsilon -p(x) = -\epsilon$. Then the residual edge $(y, x)$ is $\epsilon$-optimal with~respect to the new price of the node $y$.

\item $(y, x) \notin E_f$

There are two subscenarios.

\begin{enumerate}

\item $(y, x) \in E$ 

In the situation that $(y, x) \in E$ and $(y, x) \notin E_f$ we have $f(y,x) = u_f(y,x)$. However, the fulfillment stage of $push(x, y)$ pushes the flow through the edge $(x, y)$ and hence adds the~reverse~edge $(y, x)$ to $E_f$. Otherwise, the preparation stage of~$relabel(y)$, which is performed before the fulfillment stage of $push(x, y)$, would not consider the edge $(y, x)$ (because it would be added to $E_f$ after that stage).
Let $(y, z_0)$ be the~edge that affects the new price of $y$ in the preparation stage of~$relabel(y)$. After the trace, it may occur that the~new residual edge $(y, x)$ has a lower reduced cost than $(y, z_0)$, that is $c_p(y, x) < c_p(y, z_0)$ and hence $c_p'(y, x) < c_p'(y, z_0)$. Therefore $-(c_p'(y, x) + \epsilon) > -(c_p'(y, z_0) + \epsilon)$ what implies that $-(c(y, x) - p(x) + \epsilon) > \tilde{p}(y)$ and hence
\mbox{$c(y, x) + \tilde{p}(y) -p(x) < -\epsilon$} and $c_p(y, x) < -\epsilon$. Then the residulal edge $(y, x)$ spoils the~\mbox{$\epsilon$-optimality} of~the~pseudoflow~$f$. Further we show that this situation is temporary and in the next step for the node $y$, $f$ becomes an $\epsilon$-optimal pseudoflow.
For the node $y$, $e(y) > 0$. Moreover, the residual edge $(y, x)$ has the smallest reduced cost among all the residual edges outgoing from $y$. Then the push operation towards $x$ can be performed on $y$ and $u_f(y, x) = \min{\{u_f(y, x), e(y)\}}$. Therefore, in the next step for the node $y$, the push operation removes the residual edge $(y, x)$ from the residual graph and hence remove the requirement $c_p(y, x) \ge -\epsilon$.

\item $(y, x) \notin E$

In the situation that $(y, x) \notin E$ and $(y, x) \notin E_f$ there must be $f(x, y) = 0$ (if $f(x, y) > 0$ then $u_f(y, x) = u(y, x) - f(y, x) = u(y, x) + f(x, y) > 0$ and then it would be $(y, x)~\in~E_f$). Similarly to the previous case, the residual edge $(y, x)$ will be added to $E_f$ in the fulfillment stage of $push(x, y)$ and the~\mbox{$\epsilon$-optimality} of the flow might be violated because of~$(y, x)$. However, like in the previous case, the residual edge $(y, x)$ will be removed from~$E_f$ because of the push operation from $y$ towards $x$ which will push the flow of value $f(y, x)$. 

\end{enumerate}

\end{enumerate}

\item applying $push(x, y)$ and $push(y, z)$ operations.

According to Lemma 5.3 this trace is equivalent to a stage-clean trace, where $push(x, y)$ and $push(y, z)$ are performed one by one. Then this case can be reduced to the case 2 above. The $\epsilon$-optimality is preserved by this trace.

\item applying $push(x, y)$ and $relabel(x)$ operations.

According to Lemma 5.2 this two operations cannot be intereleaved because the active node can be dealt with only by one operation per step of the algorithm.

\item applying $push(x, y)$ and $push(y, x)$ operations.

These two operations cannot be intereleaved because the two conditions $c_p(x, y) < 0$ and $c_p(y, x) < 0$ are mutually exclusive.

\item applying more operations than two.

Any trace constisting of more than two operations is interleaved.\\ The~proof is similar to the above and is omitted.

\end{enumerate}
\end{proof}

\begin{lm}
When the algorithm terminates the pseudoflow $f$ is $\epsilon$-optimal.
\end{lm}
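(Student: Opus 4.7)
The plan is to obtain Lemma~5.6 as a direct corollary of Lemma~5.5 together with the termination condition of Algorithm~5.4. By Lemma~5.5, throughout the execution every residual edge $(x,y)\in E_f$ satisfies $c_p(x,y)\ge -\epsilon$, with a single transient exception (the subscenarios of case~5(b) in the proof of Lemma~5.5, where a $\textit{push}(x,y)$ creates a reverse residual edge $(y,x)$ whose reduced cost was not seen by a preceding preparation stage of $\textit{relabel}(y)$). Since every thread exits the while loop on line~2 of \textit{refine} only when its node satisfies $e(x)\le 0$, the algorithm halts only in a configuration where no node is active.

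First I would argue that such a transient violation cannot survive until termination. In the transient scenario, immediately after the offending push the endpoint $y$ has $e(y)>0$, and, as noted in the proof of Lemma~5.5, the newly created $(y,x)$ is the residual edge of smallest reduced cost leaving $y$. Therefore the next iteration of thread $y$'s while loop is forced to execute $\textit{push}(y,x)$; with the unit capacities of our assignment instance this push saturates $(y,x)$ and removes it from $E_f$, erasing the violation. Hence, as long as $y$ is active this repair step is mandatory; and once the algorithm has terminated, $y$ is necessarily inactive, so the repair has already been performed or the edge is no longer residual.

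Combining these two facts, at termination every residual edge $(x,y)\in E_f$ satisfies $c_p(x,y)\ge -\epsilon$, so the pseudoflow $f$ is $\epsilon$-optimal with respect to the current price function $p$, which is exactly the statement of Lemma~5.6. (Incidentally, at termination all excesses are zero by flow conservation together with $e(x)\le 0$ for every $x$, so $f$ is in fact a flow; but the lemma only asserts the weaker pseudoflow version, which is what the argument above delivers.)

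The main obstacle will be making precise the claim that every transient violation created during the lock-free execution is necessarily cleared before the global termination condition is met. This amounts to showing that the repair push on $y$ cannot be delayed past the point at which thread $y$ exits its while loop, which in turn follows from the simple observation that thread $y$ cannot leave the loop while $e(y)>0$, combined with the fact that the repair is a single step of that same loop. Once this liveness-style point is pinned down, the rest of the proof is a purely mechanical synthesis of the invariant established in Lemma~5.5 with the halting condition of Algorithm~5.4.
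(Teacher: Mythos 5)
Your proposal is correct and follows essentially the same route as the paper: combine the invariant of Lemma~5.5 with the fact that the procedure can only halt when no node is active (the paper cites Lemma~5.1 and the initial $\epsilon$-optimality for this). Your explicit argument that the transient violation of case~5(b) must be repaired before thread $y$ can exit its loop is a point the paper's one-line proof leaves implicit, but it is a refinement of the same argument rather than a different approach.
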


\begin{proof}
The initial pseudoflow $f$ is $\epsilon$-optimal. According to lemma 5.5, any~trace consisting of the flow and/or the price update operations preserve \mbox{$\epsilon$-optimality} of $f$. According to lemma 5.1, the refine procedure can terminate only if~there is no active node, which implies that the pseudoflow $f$ is an $\epsilon$-optimal flow.
\end{proof}

	\subsection{Our Implementation}

Our implementation of the cost scaling algorithm uses the method presented in Algorithm 5.4, the global price update heuristic in Algorithm 5.3 and the arc-fixing heuristic described above.
We also use our implementation of the lock-free push-relabel algorithm changed appropriately for the~cost-scaling method. 



After loading the input, the graph is constructed in the host global memory. Next, all arrays are copied to the global memory on device. Since the~copied addresses point to locations in the host memory and the graph can be arbitrary we use an additional array to store information about adjacency lists of vertices and set valid addresses in the device memory. 
Futher, while running the algorithm only three arrays are copied between host and~device: the prices, the excesses and the flows.

In each iteration of the refine loop the host thread calls the push-relabel kernel until the pseudoflow $f$ becomes a flow, that is,\\ until $EXCESS(source)~=~0$ and $EXCESS(sink) = Total$,\\ where $Total~=~|X|~=~|Y|$. The kernels are launched by $|V| + 2$ threads. The control is returned back to the host in two cases. First, after each \textit{CYCLE} iterations, (\textit{CYCLE} has been preset to 500000).
In the second case, if~$EXCESS(source) = 0$ and $EXCESS(sink) = Total$. The first case means that the calculation is not finished, but the control is returned to~the~host. The second means that the pseudoflow $f$ becomes a flow. In~the~first case, it is common that the running kernel is interrupted and restored without changes in the graph. 
The purpose of this is to avoid terminating the execution of the kernel by GPU (\textit{launch timed out}). This often happens when the graph has many edges.

In our implementation only after the first running of the push-relabel kernel the heuristics are performed. They can be executed many times but the way we did it yields the best results. Similarly as in [15] we chose the~value~of~$ALPHA$ constant equal $10$ because in our tests other values much extended the running time of the program. The best results of the~algorithm are achived for the thread block of size $32 \times 16$.

The first heuristic executed is $arc\_fixing$ kernel then $globalPriceUpdate$ C procedure. The $arc\_fixing$ kernel deletes edges from the graph residing in the device memory by removing them from the adjacency lists and sets the~flows on them to $-10$ (any negative constant would do) in order not to free the memory. Then the arrays of~flows and prices are copied to the host memory. To make the graphs on~the~device and on the host equal, the procedure of the global relabeling deletes from the host graph the edges which were previously deleted from~the~graph on the device memory. They are recognized by the value of~a~flow equal $-10$. Next the global relabeling is performed and further the~new prices are copied back to the device memory.

	\section{Discussion and Future Work}

In this paper we have presented an efficient implementation of Hong's parallel non-blocking push-relabel algorithm and an~implementation of~the~cost-scaling algorithm for~the~assignment problem. The \textit{Refine} procedure of~the~cost-scaling algorithm uses the parallel push-relabel lock-free algorithm and runs in $O(n^2 m)$ time, where the complexity is analyzed in respect to the number of push and relabel operations. The amount of work performed by the program (the total number of instructions executed) is the same as in the sequential Algorithm 5.2, i.e. $O(n^2m \min{\{\log(nC), m\log n\}})$. Both implementations run on GPU using CUDA architecutre. 

For our purpose, the cost scaling algorithm for the assignment problem is used for the complete bipartite graphs of size $|X|=|Y|<=30$. 
The~execution time of our implementation for graphs of this size and costs of~edges at~most~$100$, is about $1/20$s which allows for real-time applications. 

The maximum efficiency of an algorithm implemented in CUDA architecture is achieved when the number of threads is large and all threads execute the same task. Otherwise the running time can be equal the running time of a sequential algorithm. In the case of the large complete bipartite graphs the presented algorithm is not efficient.

Further reasearch could be made towards the use of the CUDA architecutre to finding a maximum cardinality matching of maximum weight for arbitrary (nonbipartite) graphs.

\newpage

\end{document}